\newcommand{\bra}[1]{\langle #1|}
\newcommand{\ket}[1]{|#1\rangle}
\newcommand{\braket}[2]{\langle #1|#2\rangle}
\newcommand{\ketbra}[2]{|#1 \rangle \langle #2|}
\def\E{{\cal E}}
\def\H{{\cal H}}
\def\I{{\cal I}}
\def\J{{\cal J}}
\def\IQ{\boldsymbol{\cal I}}
\def\JQ{\boldsymbol{\cal J}}
\def\Eset{\mathbb{E}}
\def\Rset{\mathbb{R}}
\def\rank{\operatorname{rank}}
\def\Tr{\operatorname{Tr}}
\def\Salicru{H_{(h,\phi)}}
\def\SalicruQ{\mathbf{H}_{(h,\phi)}}
\def\GenF{F_{(f,\alpha)}}
\def\GenFQ{\mathbf{F}_{(f,\alpha)}}
\def\id{\mathrm{id}}
\begin{document}


\title{A family of generalized quantum entropies: definition and properties}

\author{G.M. Bosyk \and S. Zozor \and F. Holik \and  M. Portesi \and P.W. Lamberti}
\institute{G.M. Bosyk  ( \email{gbosyk@fisica.unlp.edu.ar} ) \and  F. Holik \and
  M.  Portesi  \at   Instituto  de  F\'isica  La  Plata   (IFLP),  CONICET,  and
  Departamento de  F\'isica, Facultad de Ciencias  Exactas, Universidad Nacional
  de La Plata, C.C.~67, 1900 La  Plata, Argentina \and S.  Zozor \at Laboratoire
  Grenoblois d'Image,  Parole, Signal et  Automatique (GIPSA-Lab, CNRS),  11 rue
  des Math\'ematiques, 38402 Saint  Martin d'H\`eres, France \and P.W.  Lamberti
  \at  Facultad de  Matem\'atica, Astronom\'ia  y F\'isica  (FaMAF), Universidad
  Nacional  de  C\'ordoba,  and  CONICET,  Avenida Medina  Allende  S/N,  Ciudad
  Universitaria, X5000HUA, C\'ordoba, Argentina }

\maketitle


\begin{abstract}
  We  present  a  quantum   version  of  the  generalized  $(h,\phi)$-entropies,
  introduced by Salicr\'u \textit{et al.} for the study of classical probability
  distributions.   We establish their  basic properties,  and show  that already
  known quantum entropies such as  von Neumann, and quantum versions of R\'enyi,
  Tsallis, and  unified entropies, constitute particular classes  of the present
  general quantum Salicr\'u form.  We exhibit that majorization plays a key role
  in explaining  most of their common  features.  We give  a characterization of
  the quantum  $(h,\phi)$-entropies under the action of  quantum operations, and
  study  their properties  for composite  systems.  We  apply  these generalized
  entropies to the problem of detection of quantum entanglement, and introduce a
  discussion on possible generalized conditional entropies as well.
\end{abstract}


\keywords{Quantum entropies \and Majorization relation \and Entanglement detection}


\section{Introduction}
\label{s:Introduction}

During the  last decades a vast field  of research has emerged,  centered on the
study  of  the  processing,  transmission  and storage  of  quantum  information
\cite{JozSch94,  Sch95,   NieChu10,  Ren13}.   In   this  field,  the   need  of
characterizing  and determining  quantum  states stimulated  the development  of
statistical  methods that  are suitable  for  their application  to the  quantum
realm~\cite{OgaHay04,  Hol11,  GilGut13,  YuDua14}.   This entails  the  use  of
entropic measures particularly  adapted for this task. For  this reason, quantum
versions of  many classical  entropic measures started  to play  an increasingly
important role, being von Neumann entropy~\cite{vNeu27} the most famous example,
with quantum versions of R\'enyi~\cite{Ren61} and Tsallis~\cite{Tsa88} entropies
as other widely known cases.  Many other examples of interest are also available
in the literature (see, for instance,~\cite{CanRos02, HuYe06, Kan02}).

Quantum entropic  measures are of use  in diverse areas of  active research. For
example, they find  applications as uncertainty measures (as is  the case in the
study  of  uncertainty  relations~\cite{MaaUff88, Uff90,  WehWin10,  ZozBos13,
  ZozBos14, ZhaZha15}); in  entanglement measuring and detection~\cite{HorHor94,
  AbeRaj01:01, TsaLlo01,  RosCas03, BenZyc06,  Hua13, OurHam15}; as  measures of
mutual  information~\cite{Yeu97, ZhaYeu98,  Car13,  GroWal13}; and  they are  of
great  importance  in the  theory  of  quantum  coding and  quantum  information
transmission~\cite{JozSch94, Sch95, WilDat13, DatRen13, AhlLob01}.

The alluded quantum entropies are nontrivially related, and while they have many
properties in common, they also  present important differences. In this context,
the study of generalizations of  entropic measures constitutes an important tool
for studying their  general properties.  In the theory  of classical information
measures,  Salicr\'u   entropies~\cite{SalMen93}  are,  up  to   now,  the  most
generalized extension  containing the Shannon~\cite{Sha48}, R\'enyi~\cite{Ren61}
and  Tsallis~\cite{Tsa88} entropies as  particular examples  and many  others as
well~\cite{HavCha67,  Dar70,  Rat91,  BurRao82}.    But  a  quantum  version  of
Salicr\'{u} entropies has not been studied yet in the literature.  We accomplish
this task by introducing a  natural quantum version of the classical expression.
Our  construction is  shown to  be of  great generality,  and contains  the most
important examples (von Neumann, and  quantum R\'enyi and Tsallis entropies, for
instance) as particular cases.

We  show that  several important  properties  of the  classical counterpart  are
preserved,   whereas  other  new   properties  are   specific  of   the  quantum
extension. In  our proofs, one of  the main properties  to be used is  the Schur
concavity,  which  plays  a  key  role,  in  connection  with  the  majorization
relation~\cite{BenZyc06, LiBus13} for (ordered) eigenvalues of density matrices.
Our generalization provides  a formal framework which allows  to explain why the
different quantum  entropic measures share  many properties, revealing  that the
majorization relation plays an important  role in their formal structure. At the
same time,  we give concrete  clues for the  explanation of the origin  of their
differences.   Furthermore,  the appropriate  quantum  extension of  generalized
entropies can be of use for defining information-theoretic measures suitable for
concrete  purposes. Given  our  generalized framework,  conditions  can then  be
imposed  in  order  to  obtain  families  of  measures  satisfying  the  desired
properties.

The paper is organized as follows. In Sec.~\ref{s:Review} we give a brief review
of (classical)  Salicr\'{u} entropies, also known  as $(h,\phi)$-entropic forms.
Our  proposal and  results are  presented in  Sec.~\ref{s:QuantumEntropies}.  In
\ref{s:Definition}   we    start   proposing   a   quantum    version   of   the
$(h,\phi)$-entropies  using a  natural trace  extension of  the  classical form,
followed    by    the   study    of    its    Schur-concavity   properties    in
\ref{s:SchurConcavity}. Then, in \ref{s:QEvolution}, we study further properties
related   to    quantum   operations   and   the    measurement   process.    In
\ref{s:CompositeSystemsI} we discuss the properties of quantum entropic measures
for   the  case   of  composite   systems  focusing   on  additivity,   sub  and
superadditivity properties,  whereas applications to  entanglement detection are
given  in \ref{s:CompositeSystemsII}.  Sec.  \ref{s:Furtherdefinitions} contains
an analysis of  informational quantities that could be  derived from the quantum
$(h,\phi)$-entropies.   Finally,  in   Sec.~\ref{s:Conclusions},  we  draw  some
concluding remarks.


\section{Brief review of classical $(h,\phi)$-entropies}
\label{s:Review}

Inspired   by   the  work   of   Csisz\'ar~\cite{Csi67},  Salicr\'u   \textit{et
  al.}~\cite{SalMen93} defined the $(h,\phi)$-entropies:
\begin{definition}
\label{def:Salicru}
Let us consider  an $N$-dimensional probability vector \ $p =  [p_1 \: \cdots \:
p_N]^t  \in  [0  , 1]^N$  \  with  \  $\sum_{i=1}^N  p_i  = 1$.   The  so-called
$(h,\phi)$-entropy is defined as
\begin{equation}
\label{eq:SalicruEnt}
\Salicru(p) = h\left( \sum_{i=1}^N \phi(p_i) \right),
\end{equation}
where  the \textit{entropic  functionals} \  $h: \Rset  \mapsto \Rset$  \  and \
$\phi: [0,1] \mapsto \Rset$ \ are  such that either: \ (i)~$h$ is increasing and
$\phi$  is concave, \  or \  (ii)~$h$ is  decreasing and  $\phi$ is  convex. The
entropic functional $\phi$ is assumed to be strictly concave/convex, whereas $h$
is taken to be strictly monotone,  together with $\phi(0) = 0$ and $h(\phi(1)) =
0$.
\end{definition}
We  notice   that  in   the  original  definition~\cite{SalMen93},   the  strict
concavity/convexity   and   monotony  characters   were   not  imposed.    These
considerations  will  allow  us  to  determine  the case  of  equality  in  some
inequalities presented  here.  The  assumption $\phi(0) =  0$ is natural  in the
sense   that  one   can  expect   the  elementary   information  brought   by  a
zero-probability event to be zero.  Also,  an appropriate shift in $h$ allows to
consider only  the case $h(\phi(1)) =  0$, thus not  affecting generality, while
giving  the vanishing of  entropy (i.e.,\  no information)  for a  situation with
certainty.

The $(h,\phi)$-entropies \eqref{eq:SalicruEnt}  provide a generalization of some
well-known   entropies   such    as   those   given   by   Shannon~\cite{Sha48},
R\'enyi~\cite{Ren61},  Havrda--Charv\'at,  Dar\'oczy or  Tsallis~\cite{HavCha67,
  Dar70, Tsa88}, unified  Rathie~\cite{Rat91} and Kaniadakis~\cite{Kan02}, among
many others.  In Table~\ref{tab:clasicalentropies}  we list some known entropies
and give the entropic functionals $h$  and $\phi$ that lead to these quantities.
Notice that  the entropies  given in  the table enter  in one  (or both)  of the
special families determined by entropic functionals  of the form: \ $h(x) = x$ \
and \  $\phi(x)$ concave~\cite{BurRao82}, or \ $h(x)  = \frac{f(x)}{1-\alpha}$ \
and   \   $\phi(x)   =   x^\alpha$~\cite{ZozBos14}.    Indeed,   the   so-called
$\phi$-entropy (or trace-form entropy) is defined as
\begin{equation}
\label{eq:phiEnt}
H_{(\id,\phi)}(p) = \sum_{i=1}^N \phi(p_i),
\end{equation}
where $\phi$ is concave with  $\phi(0) = 0$, whereas the $(f,\alpha)$-entropy is
defined as
\begin{equation}
\label{eq:fEnt}
\GenF(p) = \frac{1}{1-\alpha} \ f\left( \sum_{i=1}^N p_i^{\,\alpha} \right),
\end{equation}
where  $f$ is  increasing  with $f(1)=0$,  and  the \textit{entropic  parameter}
$\alpha$ is nonnegative and $\alpha \neq 1$. With the additional assumption that
$f$ is differentiable  and $f'(1) = 1$, one recovers the  Shannon entropy in the
limit $\alpha \to 1$.

\begin{table}[ht]
\begin{center}
\begin{tabular}
{
|>{\vspace{1mm}\centering}p{1.6cm}
|>{\vspace{1mm}\centering}p{4.5cm}
|>{\vspace{1mm}\hspace{2mm}}p{5cm}|
}
\hline
Name &
Entropic functionals &
\centerline{Entropy} \\[2mm]
\hline
Shannon &
$h(x) = x, \quad \phi(x) = - x \ln x\:$ &
$H(p) = - \sum_i p_i \ln p_i$ \\[2mm]
\hline
R\'enyi &
$ h(x) = \frac{\ln(x)}{1-\alpha} , \quad \phi(x) = x^\alpha$ &
$R_\alpha(p) = \frac{1}{1-\alpha}\ln\left( \sum_i p_i^{\,\alpha} \right)$\\[2mm]
\hline
%
Tsallis &
$h(x) = \frac{x-1}{1-\alpha} , \quad \phi(x) = x^\alpha $ &
$T_\alpha(p) = \frac{1}{1-\alpha} \left( \sum_i p_i^{\,\alpha} - 1 \right) $\\[2mm]
\hline
Unified &
$ h(x)= \frac{x^s-1}{(1-r)s} , \quad \phi(x) = x^r $ &
$E_r^s(p)= \frac{1}{(1-r)s} \left[ \left(\sum_i p_i^{\,r} \right)^s- 1 \right]$\\[2mm]
\hline
Kaniadakis    &
$ h(x)= x , \quad \phi(x) = \frac{x^{\kappa+1} - x^{-\kappa+1}}{2\kappa} $ &
$S_\kappa(p)= -\sum_i \frac{p_i^{\kappa+1} - p_i^{-\kappa+1}}{2\kappa}$\\[2mm]
\hline
\end{tabular}
\end{center}
\caption{Some well-known particular cases of $(h,\phi)$-entropies.}
\label{tab:clasicalentropies}
\end{table}

\

As  recalled  in Ref.~\cite{ZozBos14},  the  $(h,\phi)$-entropies share  several
properties as functions of the probability vector $p$:
\begin{itemize}
\item[$\bullet$] $\Salicru(p)$ is invariant  under permutation of the components
  of $p$.  Hereafter,  we assume that the components  of the probability vectors
  are written in decreasing order.
\item[$\bullet$] $\Salicru([p_1 \: \cdots \: p_N \: 0]^t) = H_{(h,\phi)}([p_1 \:
  \cdots \: p_N]^t)$: extending the space by adding zero-probability events does
  not change the value of the entropy (expansibility property).
\item[$\bullet$]  $\Salicru$  decreases  when  some events  (probabilities)  are
  merged,  that is, $\Salicru([p_1  \quad p_2  \quad p_3  \: \cdots  p_N]^t) \ge
  \Salicru([p_1+ p_2 \quad p_3 \: \cdots  p_N]^t)$. This is a consequence of the
  Petrovi\'c inequality that states that $\phi(a+b) \le \phi(a) + \phi(b)$ for a
  concave function $\phi$ vanishing at  0 (and the reverse inequality for convex
  $\phi$)~\cite[Th.~8.7.1]{Kuc09},     together      with     the     increasing
  (resp. decreasing) property of $h$.
\end{itemize}
Other    properties   relate    to    the   concept    of   majorization    (see
e.g.~\cite{MarOlk11}). Given two  probability vectors $p$ and $q$  of length $N$
whose components are  set in decreasing order, it is said  that $p$ is majorized
by $q$ (denoted  as $p \prec q$), when $\sum_{i=1}^n  p_i \leq \sum_{i=1}^n q_i$
for  all $n =  1, \ldots,  N-1$ and  $\sum_{i=1}^N p_i  = \sum_{i=1}^N  q_i$. By
convention, when  the vectors do not  have the same  dimensionality, the shorter
one is  considered to be  completed by zero  entries (notice that this  will not
affect  the  value  of   the  $(h,\phi)$-entropy  due  to  the  expansibility
property).  The majorization relation  allows to demonstrate some properties for
the $(h,\phi)$-entropies:
\begin{itemize}
\item[$\bullet$]  It is  strictly Schur-concave:  $p \prec  q \:  \Rightarrow \:
  \Salicru(p)  \ge \Salicru(q)$  with equality  if and  only if  $p =  q$.  This
  implies  that  the  more  concentrated  a  probability  vector  is,  the  less
  uncertainty  it represents  (or,  in  other words,  the  less information  the
  outcomes will bring).  The Schur-concavity of $\Salicru$ is consequence of the
  Karamata theorem~\cite{Kar32} that states that if $\phi$ is [strictly] concave
  (resp.\   convex),  then  $p   \mapsto  \sum_i   \phi  (p_i)$   is  [strictly]
  Schur-concave  (resp.\  Schur-convex) (see~\cite[Chap.~3,~Prop.~C.1]{MarOlk11}
  or~\cite[Th.~II.3.1]{Bha97}), together with  the [strictly] increasing (resp.\
  decreasing) property of $h$.
\item[$\bullet$] Reciprocally, if $\Salicru(p)  \ge \Salicru(q)$ for all pair of
  entropic  functionals $(h,\phi)$,  then $p  \prec  q$.  This  is an  immediate
  consequence  of Karamata theorem~\cite{Kar32}  (reciprocal part)  which states
  that  if  for   any  concave  (resp.\  convex)  function   $\phi$  one  has  \
  $\sum_{i=1}^n  \phi(p_i) \ge \sum_{i=1}^n  \phi(q_i)$, then  $\sum_{i=1}^n p_i
  \le \sum_{i=1}^n q_i$ for all $n  = 1,\ldots, N-1\:$ and $\:\sum_{i=1}^N p_i =
  \sum_{i=1}^N        q_i$       (see        also~\cite[A.3-(iv),~p.~14       or
  Ch.~4,~Prop.~B.1]{MarOlk11} or~\cite[Th.~II.3.1]{Bha97}).
\item[$\bullet$] It is bounded:
\begin{equation}
\label{eq:Entlowerupperbounded}
0 \le \Salicru(p) \le h \left( \|p\|_0 \, \phi \left( \frac{1}{\|p\|_0}
\right) \right) \le h \left( N \, \phi \left( \frac{1}{N} \right) \right) ,
\end{equation}
where $\|p\|_0$ stands  for the number of nonzero  components of the probability
vector.  The bounds are consequences  of the majorization relations valid to any
probability vector $p$ (see e.g.~\cite[p.~9,~Eqs.~(6)-(8)]{MarOlk11})
\[
\left[  \frac{1}{N}  \: \cdots  \:  \frac{1}{N}  \right]^t  \, \prec  \,  \left[
  \frac{1}{\|p\|_0} \: \cdots \: \frac{1}{\|p\|_0} \: 0 \: \cdots \: 0 \right]^t
\, \prec \, p \, \prec \left[ 1 \: 0 \: \cdots \: 0 \right]^t ,
\]
together with the Schur-concavity of $\Salicru$.  From the strict concavity, the
bounds  are  attained if  and  only if  the  inequalities  in the  corresponding
majorization relations reduce to equalities.
\end{itemize}

From   the    previous   discussion   we   can   see    immediately   that   the
$(h,\phi)$-entropies     fulfill    the     first     three    Shannon--Khinchin
axioms~\cite{Khi57},  which  are  (in   the  form  given  in  Ref.~\cite{Tem16})
(i)~continuity, (ii)~maximality (i.e.,\ it is maximum for the uniform probability
vector)  and  (iii)~expansibility.   The  fourth  Shannon--Khinchin  axiom,  the
so-called Shannon  additivity, is the rule  for composite systems  that is valid
only for  the Shannon  entropy (notice that  there are other  axiomatizations of
Shannon  entropy,  e.g.,~those given  by  Shannon  in~\cite{Sha48}  or by  Fadeev
in~\cite{Fad56}). A relaxation of Shannon additivity axiom, called composability
axiom, has been introduced~\cite{Tsa09,  Tem16}; it establishes that the entropy
of  a composite  system  should  be a  function  only of  the  entropies of  the
subsystems and a  set of parameters.  The class of  entropies that satisfy these
axioms (the first  three Shannon--Khinchin axioms and the  composability one) is
wide~\cite{Tem15}  but  nevertheless  can  be   viewed  as  a  subclass  of  the
$(h,\phi)$-entropies.

It has  recently been  shown that  the $(h,\phi)$-entropies can  be of  use, for
instance,  in  the study  of  entropic  formulations  of the  quantum  mechanics
uncertainty principle~\cite{ZozBos13, ZozBos14}.  They have also been applied in
the      entropic     formulation     of      noise--disturbance     uncertainty
relations~\cite{Ras16}.    Our  aim  is   to  extend   the  definition   of  the
$(h,\phi)$-entropies for  quantum density operators, and  study their properties
and potential applications in entanglement detection.


\section{Quantum $(h,\phi)$-entropies}
\label{s:QuantumEntropies}


\subsection{Definition and link with the classical entropies}
\label{s:Definition}

The von  Neumann entropy~\cite{vNeu27} can be  viewed as the  quantum version of
the classical Shannon entropy~\cite{Sha48},  by replacing the sum operation with
a  trace.   We  recall that  for  an  Hermitian  operator  $A  = \sum_i  a_i  \,
\ketbra{a_i}{a_i}$, with $\ket{a_i}$ being  its eigenvectors in $\H^N$ and $a_i$
being  the corresponding  eigenvalues, one  has $\phi(A)  = \sum_i  \phi(a_i) \,
\ketbra{a_i}{a_i}$,  and the  trace operation  is the  sum of  the corresponding
eigenvalues (i.e.,\  $ \Tr \phi(A) =  \sum_i \phi(a_i) $, where  $\Tr$ stands for
the trace operation).  In a  similar way to the classical generalized entropies,
we propose the following definition:
\begin{definition}
\label{def:QuantumSalicru}
Let us consider  a quantum system described by a  density operator $\rho$ acting
on an $N$-dimensional  Hilbert space $\H^N$, which is  Hermitian, positive (that
is,  $\rho  \ge  0$),  and  with   $\Tr  \rho  =  1$.   We  define  the  quantum
$(h,\phi)$-entropy as follows
\begin{equation}
\label{eq:QuantumSalicru}
\SalicruQ(\rho) = h\left( \Tr \, \phi(\rho) \right) ,
\end{equation}
where  the \textit{entropic  functionals} \  $h: \Rset  \mapsto \Rset$  \  and \
$\phi:  [0,1] \mapsto  \Rset$ \  are  such that  either: \  (i)~$h$ is  strictly
increasing  and  $\phi$  is  strictly  concave,  \ or  \  (ii)~$h$  is  strictly
decreasing and $\phi$ is strictly convex.   In addition, we impose $\phi(0) = 0$
and $h(\phi(1)) = 0$.
\end{definition}

The link between Eqs.~\eqref{eq:SalicruEnt} and~\eqref{eq:QuantumSalicru} is the
following.  Let  us consider  the  density  operator  written in  diagonal  form
(spectral    decomposition)    as   $\rho    =    \sum_{i=1}^N   \lambda_i    \,
\ketbra{e_i}{e_i}$,   \   with  eigenvalues   $\lambda_i   \geq  0$   satisfying
$\sum_{i=1}^N \lambda_i  = 1$, and being  $\{\ket{e_i}\}_{i=1}^N$ an orthonormal
basis. Then, the quantum $(h,\phi)$-entropy can be computed as
\begin{equation}
\label{eq:SalicruEntquantumdiagonal}
\SalicruQ(\rho) = \Salicru(\lambda) .
\end{equation}
This equation states  that the quantum $(h,\phi)$-entropy of  a density operator
$\rho$,  is nothing  but  the classical  $(h,\phi)$-entropy  of the  probability
vector $\lambda$  formed by the eigenvalues  of $\rho$. Notice  that despite the
link between  the quantal  and the  classical entropies defined  from a  pair of
entropic functionals $(h, \phi)$, we keep a different notation for the entropies
($\mathbf{H}$  and  $H$,  respectively)  in  order  to  distinguish  their  very
different meanings.

The  most relevant  examples of  quantum entropies,  which are  the  von Neumann
one~\cite{vNeu27},  quantum  versions  of  the  R\'enyi,  Tsallis,  unified  and
Kaniadakis  entropies~\cite{HuYe06,  Ras11:06,   FanCao15,  OurHam15},  and  the
quantum  entropies   proposed  in  Refs.~\cite{CanRos02,   Sha12},  are  clearly
particular cases of our quantum $(h,\phi)$-entropies~\eqref{eq:QuantumSalicru}.

In   what  follows,   we   give   some  general   properties   of  the   quantum
$(h,\phi)$-entropies (the validity of the properties for the von Neumann entropy
is  already  known,  see  for example~\cite{Lie75,  Weh78,  OhyPet93,  BenZyc06,
  NieChu10}).      In    our     derivations,    we     often     exploit    the
link~\eqref{eq:SalicruEntquantumdiagonal}.   With  that  purpose,  hereafter  we
consider, without loss of generality, that the eigenvalues of a density operator
$\rho$ are arranged in a (probability) vector $\lambda$, with components written
in decreasing order.


\subsection{Schur-concavity, concavity and bounds}
\label{s:SchurConcavity}

One of  the main  properties of the  classical $(h,\phi)$-entropies,  namely the
\emph{Schur-concavity}, is preserved in the quantum version of these entropies:
\begin{proposition}
\label{prop:Schurconcavity}
Let $\rho$ and $\rho'$ be two  density operators, acting on $\H^N$ and $\H^{N'}$
respectively, and such that $\rho \prec \rho'$. Then
\begin{equation}\label{eq:Schurconvavity}
\SalicruQ(\rho) \geq \SalicruQ(\rho'),
\end{equation}
with equality if and only if either $\rho'  = U \rho U^\dag$, or $\rho = U \rho'
U^\dag$, for any  isometric operator $U$ (i.e.,\ $U^\dag U  = I$), where $U^\dag$
stands for  the adjoint of $U$.   Reciprocally, if Eq.~\eqref{eq:Schurconvavity}
is satisfied for all pair of entropic functionals, then $\rho \prec \rho'$.
\end{proposition}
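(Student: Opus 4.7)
The plan is to reduce Proposition~\ref{prop:Schurconcavity} entirely to its classical counterpart by invoking the identity $\SalicruQ(\rho) = \Salicru(\lambda)$ from Eq.~\eqref{eq:SalicruEntquantumdiagonal}, where $\lambda$ denotes the (decreasingly ordered) eigenvalue vector of $\rho$. The majorization of density operators $\rho \prec \rho'$ is, by definition, the classical majorization $\lambda \prec \lambda'$ of their eigenvalue vectors (with the shorter one padded by zeros if $N \ne N'$, which is harmless because classical $(h,\phi)$-entropies enjoy the expansibility property recalled in Sec.~\ref{s:Review}).

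First I would apply the classical strict Schur-concavity of $\Salicru$, already established in the excerpt, to get $\Salicru(\lambda) \ge \Salicru(\lambda')$, which via the link immediately yields Eq.~\eqref{eq:Schurconvavity}. For the equality case, the strict concavity/convexity of $\phi$ together with the strict monotony of $h$ (both imposed in Definition~\ref{def:QuantumSalicru}) makes the classical Schur-concavity strict, so equality in $\Salicru(\lambda) = \Salicru(\lambda')$ forces $\lambda = \lambda'$ (as padded vectors). Two density operators with identical spectra — counting multiplicities and padding zero eigenvalues — are unitarily equivalent on a common enlarged space: if $\rho = \sum_i \lambda_i \ketbra{e_i}{e_i}$ on $\H^N$ and $\rho' = \sum_i \lambda_i \ketbra{e'_i}{e'_i}$ on $\H^{N'}$ with $N \le N'$, then the map $U = \sum_{i=1}^N \ketbra{e'_i}{e_i}$ defines an isometry $U : \H^N \to \H^{N'}$ satisfying $U^\dag U = I$ and $\rho' = U \rho U^\dag$; the case $N \ge N'$ is handled symmetrically by exchanging the roles of $\rho$ and $\rho'$.

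For the reciprocal statement, the hypothesis $\SalicruQ(\rho) \ge \SalicruQ(\rho')$ for every admissible pair $(h,\phi)$ translates, via the link, into $\Salicru(\lambda) \ge \Salicru(\lambda')$ for all such pairs. Restricting to the subfamily $h = \id$ with arbitrary strictly concave $\phi$ vanishing at $0$ gives $\sum_i \phi(\lambda_i) \ge \sum_i \phi(\lambda'_i)$ for every such $\phi$. The reciprocal Karamata result quoted in the excerpt then yields $\lambda \prec \lambda'$, i.e.\ $\rho \prec \rho'$.

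The main delicate point is the equality case when $N \ne N'$: one must check that the natural partial isometry built from matching eigenvectors is well-defined and indeed satisfies $U^\dag U = I$, and that the zero-eigenvalue directions of the higher-dimensional operator are absorbed into the kernel of $U^\dag$ rather than into the identity. Once this bookkeeping is done, everything else is a transparent application of the classical results recalled in Sec.~\ref{s:Review}.
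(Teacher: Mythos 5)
Your proposal is correct and follows essentially the same route as the paper: reduce to the classical case via the eigenvalue link \eqref{eq:SalicruEntquantumdiagonal}, invoke the classical strict Schur-concavity and its Karamata reciprocal, and translate equality of (zero-padded) spectra into isometric equivalence. You simply spell out the construction of the isometry $U$ and the reduction to the $h=\id$ subfamily in more detail than the paper does.
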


\begin{proof}
  Let  $\lambda$ and  $\lambda'$ be  the vectors  of eigenvalues  of  $\rho$ and
  $\rho'$, respectively, rearranged in decreasing order and adequately completed
  with zeros to  equate their lengths.  By definition,  $\rho \prec \rho'$ means
  that   $\lambda  \prec   \lambda'$  (see~\cite[p.~314,~Eq.~(12.9)]{BenZyc06}).
  Thus,  the   Schur-concavity  of  the  quantum   $(h,\phi)$-entropy  (and  the
  reciprocal  property) is inherited  from that  of the  corresponding classical
  $(h,\phi)$-entropy,  due   to  the  link~\eqref{eq:SalicruEntquantumdiagonal}.
  From  the  strict  concavity  or  convexity  of $\phi$  and  thus  the  strict
  Schur-concavity  of  the classical  $(h,\phi)$-entropies,  the equality  holds
  in~\eqref{eq:Schurconvavity}  if and  only if  $\lambda' =  \lambda$,  that is
  equivalent to have either $\rho' = U \rho U^\dag$ (when $N \le N'$) or $\rho =
  U \rho' U^\dag$ (when $N' \le N$).
\end{proof}

As  a direct  consequence, the  quantum  $(h,\phi)$-entropy is  lower and  upper
bounded, as in the classical case:
\begin{proposition}
\label{prop:boundedentropies}
The quantum $(h,\phi)$-entropy is lower and upper bounded
\begin{equation}\label{eq:boundedentropies}
0 \, \leq \, \SalicruQ(\rho) \, \leq \, h \left( \rank\rho \ \phi\left(
\frac{1}{\rank\rho} \right) \right) \, \leq \, h \left( N \, \phi\left(
\frac{1}{N} \right) \right),
\end{equation}
where  $\rank$  stands for  the  rank  of an  operator  (the  number of  nonzero
eigenvalues).   Moreover, the  lower bound  is  achieved only  for pure  states,
whereas the upper bounds are achieved for a density operator of the form $\rho =
\frac{1}{M}  \sum_{i=1}^M  \ketbra{e_i}{e_i}$   for  some  orthonormal  ensemble
$\{\ket{e_i}\}_{i=1}^M$, with $M = \rank\rho$ in the tightest situation and $M =
N$ in the  other one (in the latter case, necessarily  $\rho = \frac{1}{N} I_N$,
being $I_N$ the identity operator in $\H^N$).
\end{proposition}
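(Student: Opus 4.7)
The plan is to reduce everything to a classical majorization chain on the eigenvalue vector and then invoke the strict Schur-concavity already established in Proposition~\ref{prop:Schurconcavity}. Let $\lambda = [\lambda_1 \: \cdots \: \lambda_N]^t$ be the eigenvalue vector of $\rho$, written in decreasing order, and set $M = \rank\rho$, so that $\lambda_1, \ldots, \lambda_M > 0$ and $\lambda_{M+1} = \cdots = \lambda_N = 0$. Via the link~\eqref{eq:SalicruEntquantumdiagonal}, $\SalicruQ(\rho) = \Salicru(\lambda)$, so the problem becomes purely classical.

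The key observation is the standard three-step majorization chain
\[
\left[ \tfrac{1}{N} \: \cdots \: \tfrac{1}{N} \right]^t \, \prec \, \left[ \tfrac{1}{M} \: \cdots \: \tfrac{1}{M} \: 0 \: \cdots \: 0 \right]^t \, \prec \, \lambda \, \prec \, \left[ 1 \: 0 \: \cdots \: 0 \right]^t,
\]
which is exactly the chain already invoked in the classical discussion at~\eqref{eq:Entlowerupperbounded}. Applying Proposition~\ref{prop:Schurconcavity} (which reverses $\prec$ when we pass through $\SalicruQ$) and evaluating $\SalicruQ$ on each of the three extremal operators yields the three inequalities: on a pure state one gets $h(\phi(1)) = 0$ by the normalization assumption $h(\phi(1)) = 0$; on the uniform mixture of $M$ orthonormal projectors one gets $h(M\,\phi(1/M))$; and on $\frac{1}{N}I_N$ one gets $h(N\,\phi(1/N))$. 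This immediately produces~\eqref{eq:boundedentropies}.

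For the equality cases I would appeal to the strict Schur-concavity part of Proposition~\ref{prop:Schurconcavity}. The lower bound is saturated iff $\lambda = [1 \: 0 \: \cdots \: 0]^t$, i.e., iff $\rho$ is a rank-one projector (a pure state). The tighter upper bound is saturated iff $\lambda$ coincides (up to padding by zeros) with the flat vector of length $M = \rank\rho$, which is equivalent to $\rho = \frac{1}{M}\sum_{i=1}^{M} \ketbra{e_i}{e_i}$ for some orthonormal family $\{\ket{e_i}\}_{i=1}^{M}$. The looser upper bound is saturated iff additionally $M = N$, forcing $\rho = \frac{1}{N} I_N$.

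The only subtlety I anticipate is bookkeeping about dimensions and zero padding when comparing vectors of different lengths in the majorization chain; this is already justified by the expansibility-style convention recalled before Proposition~\ref{prop:Schurconcavity}, so no further argument is needed. Everything else is a direct chain of implications from the proposition already proved, and no new inequality has to be established by hand.
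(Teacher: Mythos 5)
Your proposal is correct and follows essentially the same route as the paper: reduce to the classical eigenvalue vector via the link~\eqref{eq:SalicruEntquantumdiagonal}, invoke the majorization chain already used for~\eqref{eq:Entlowerupperbounded}, and obtain the bounds and equality cases from (strict) Schur-concavity. The only cosmetic point is that the vanishing of the lower bound on a pure state uses $\phi(0)=0$ for the padded zero entries in addition to $h(\phi(1))=0$, which you have implicitly covered by citing the zero-padding convention.
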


\begin{proof}
  Let  $\lambda$ be  the vector  formed by  the eigenvalues  of  $\rho$. Clearly
  $\rank\rho = \|\lambda\|_0$, so that  the bounds are immediately obtained from
  that     of     the    classical     $(h,\phi)$-entropy,     due    to     the
  link~\eqref{eq:SalicruEntquantumdiagonal}.   Moreover, in the  classical case,
  $\Salicru(\lambda) = 0$ if and only if  $\lambda = [1 \: 0 \: \cdots \: 0]^t$,
  that is $\rho  = \ketbra{\Psi}{\Psi}$ is a pure state. On  the other hand, the
  upper   bounds   are   attained   if   and   only   if   $\lambda   =   \bigg[
  \underbrace{\frac{1}{M}  \: \cdots  \:  \frac{1}{M}}_M  \: 0  \:  \cdots \:  0
  \bigg]^t$, with $M = \rank \rho$ or $M = N$.
\end{proof}

\

The classical $(h,\phi)$-entropies and  their quantum versions are generally not
concave.  We establish here sufficient conditions on the entropic functional $h$
to ensure the concavity property of the quantum $(h,\phi)$-entropies.  We notice
that, with  the same  sufficient conditions, the  classical counterpart  is also
concave:
\begin{proposition}
\label{prop:concavity}
If the entropic  functional $h$ is concave, then  the quantum $(h,\phi)$-entropy
is concave, that is, for all $0\leq \omega \leq 1$,
\begin{equation}
\label{eq:concavity}
\SalicruQ(\omega\rho + (1-\omega)\rho') \geq \omega \,\SalicruQ(\rho) +
(1-\omega) \,\SalicruQ(\rho').
\end{equation}
\end{proposition}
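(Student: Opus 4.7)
The plan is to decompose the claim into two separate monotonicity/convexity steps: first, I would establish that $\rho \mapsto \Tr \phi(\rho)$ is concave (in case (i)) or convex (in case (ii)); then I would transport this through $h$ using, in succession, its monotonicity and its assumed concavity.

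For case (i), with $h$ strictly increasing and $\phi$ strictly concave, the key lemma I would invoke is the classical fact that for any concave function $\phi : \Rset \to \Rset$, the trace-functional $A \mapsto \Tr \phi(A)$ is concave on the set of Hermitian operators (this follows, e.g., from Peierls' inequality or from Klein's inequality, and can also be obtained as a direct consequence of Schur-concavity of $\sum_i \phi(\lambda_i)$ together with the fact that the eigenvalues of $\omega \rho + (1-\omega)\rho'$ are majorized by $\omega \lambda(\rho) + (1-\omega)\lambda(\rho')$). This yields
\[
\Tr \phi\bigl( \omega \rho + (1-\omega) \rho' \bigr) \; \geq \; \omega \Tr \phi(\rho) + (1-\omega) \Tr \phi(\rho').
\]
Applying the increasing function $h$ preserves this inequality, and then concavity of $h$ gives
\[
h\bigl( \omega \Tr \phi(\rho) + (1-\omega) \Tr \phi(\rho') \bigr) \; \geq \; \omega \, h(\Tr \phi(\rho)) + (1-\omega) \, h(\Tr \phi(\rho')),
\]
which, by definition of $\SalicruQ$, is exactly \eqref{eq:concavity}.

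For case (ii), with $h$ strictly decreasing and $\phi$ strictly convex, the same trace-functional lemma applied to the convex $\phi$ now yields the reverse inequality for $\Tr \phi$. Applying the decreasing $h$ flips it again, so one recovers
\[
h\bigl( \Tr \phi(\omega \rho + (1-\omega)\rho') \bigr) \; \geq \; h\bigl( \omega \Tr \phi(\rho) + (1-\omega) \Tr \phi(\rho') \bigr),
\]
and concavity of $h$ finishes the argument in the same way. Both cases are thus handled uniformly: the monotonicity of $h$ is exactly what is needed to convert the ``correct'' type of inequality for $\Tr \phi$ into an inequality with the right direction inside $h$, and concavity of $h$ is what allows one to split the convex combination out of its argument.

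The one substantive step is the concavity (resp.\ convexity) of $\rho \mapsto \Tr \phi(\rho)$ for a scalar concave (resp.\ convex) $\phi$. I would either cite this as a standard matrix-analysis result (it does \emph{not} require operator concavity of $\phi$, only concavity in the ordinary sense) or derive it in two lines from the majorization $\lambda(\omega \rho + (1-\omega)\rho') \prec \omega \lambda(\rho) + (1-\omega)\lambda(\rho')$ (a consequence of Ky Fan's maximum principle) combined with the Schur-concavity of $\lambda \mapsto \sum_i \phi(\lambda_i)$ already recalled in Section~\ref{s:Review}. The latter route has the advantage of keeping the proof self-contained within the majorization framework that underlies the rest of this section.
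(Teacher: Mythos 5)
Your proposal is correct and follows essentially the same route as the paper: both reduce the claim to the concavity (resp.\ convexity) of $\rho \mapsto \Tr \phi(\rho)$ and then push the resulting inequality through $h$ using first its monotonicity and then its concavity, with the paper establishing the trace-functional step via Peierls' inequality --- one of the two derivations you mention. Your alternative majorization-based derivation of that step (Ky Fan's principle plus Schur-concavity, together with one componentwise application of ordinary concavity of $\phi$) is also valid and consistent with the rest of the section, but it is not the one the paper actually uses.
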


\begin{proof}
  Let us  first recall the Peierls  inequality (see~\cite[p.~300]{BenZyc06}): if
  $\phi$ is  a convex function and  $\sigma$ is an Hermitian  operator acting on
  $\H^N$, then for any  arbitrary orthonormal basis $\{\ket{f_i}\}_{i=1}^N$, the
  following inequality holds
\begin{equation}
\label{eq:Peierlineq}
\Tr \phi(\sigma) \geq \sum_i \phi\left(\bra{f_i} \sigma \ket{f_i}\right).
\end{equation}
Consider  \  $\sigma  =  \omega  \rho  + (1-\omega)  \rho'  =  \sum_i  \lambda_i
\ketbra{e_i}{e_i}$ \  written in  its diagonal form,  $h$ decreasing  and $\phi$
convex. Then
\begin{equation*}\begin{array}{llll}
\Tr \phi(\sigma) & = & \displaystyle
\sum_i \phi(\lambda_i) =  \sum_i \phi(\bra{e_i} \sigma \ket{e_i})&\\[5mm]
& = & \displaystyle
\sum_i \phi(\bra{e_i} [\omega \rho + (1-\omega) \rho']  \ \ket{e_i})&\\[5mm]
& \le & \displaystyle \omega \sum_i \phi(\bra{e_i} \rho \ket{e_i}) + (1-\omega)
\sum_i \phi(\bra{e_i} \rho' \ket{e_i}) &
\: \mbox{[$\phi$ being convex]}\\[5mm]
& \le & \omega \Tr \phi(\rho) + (1-\omega) \Tr \phi(\rho')  &
\: \mbox{[due to Peierls inequality].}
\end{array}\end{equation*}
Notice that in the case $\phi$ concave, these two inequalities are reversed.
Thus, one finally has
\begin{eqnarray*}\begin{array}{llll}
h\left(\Tr \phi(\sigma)\right) & \ge & h\left(\omega \Tr \phi(\rho) + (1-\omega)
\Tr \phi(\rho')\right) &
\: \mbox{[$h$ being decreasing]} \\[5mm]
& \ge & \omega \, h( \Tr \phi(\rho)) + (1-\omega)  \ h( \Tr \phi(\rho'))
&
\: \mbox{[assuming $h$ concave].}
\end{array}\end{eqnarray*}
Notice that  in the  case $h$  increasing, the first  inequality holds,  and the
second inequality holds as well, from concavity of $h$ (with equality valid when
$h$ is the identity function).  Making use of Def.~\ref{def:QuantumSalicru}, the
proposition  is proved  in both  cases, under  the condition  that  the entropic
functional $h$ is concave.
\end{proof}

Note also that for the class  of $(f,\alpha)$-entropies, the concavity of $h$ is
equivalent to that  of $\frac{f}{1-\alpha}$.  Moreover, for the  von Neumann and
quantum Tsallis entropies the conditions of Proposition~\ref{prop:concavity} are
satisfied,  and  it  is well known  that  these  entropies have  the  concavity
property.   For quantum  R\'enyi  entropies, the  concavity  property holds  for
$0<\alpha  <1$  as  consequence  of  Proposition~\ref{prop:concavity},  but  for
$\alpha > 1$  the proposition does not apply  (see~\cite[p.~53]{BenZyc06} for an
analysis of concavity  in this range for classical  R\'enyi entropies).  For the
quantum  unified  entropies,  the  concavity  property holds  in  the  range  of
parameters   $r<1$  and   $s<1$   or   $r>1$  and   $s>1$   as  consequence   of
Proposition~\ref{prop:concavity},    which    complements    the    result    of
Ref.~\cite{HuYe06} and improves the result of Ref.~\cite{Ras11:06}.

It  is  interesting  to  remark  that  using the  concavity  property  given  in
Proposition~\ref{prop:concavity}, it is possible to define in a natural way, for
$h$  concave,  a  (Jensen-like)  quantum $(h,\phi)$-divergence  between  density
operators $\rho$ and $\rho'$, as follows:
\begin{equation}
\label{eq:Salicrudivergence}
\mathbf{J}_{(h,\phi)}\left(\rho,\rho'\right) =
\SalicruQ\left(\frac{\rho+\rho'}{2}\right) - \frac{1}{2} \left[\SalicruQ(\rho) +
\SalicruQ(\rho')\right],
\end{equation}
which is  nonnegative and symmetric  in its arguments.   This is similar  to the
construction  presented  in Ref.~\cite{BurRao82}  for  the  classical case,  and
offers  an  alternative   to  the  quantum  version  of   the  usual  Csisz\'ar
divergence~\cite{Csi67,  Sha12}.    It  can  be   shown  that  for   pure  sates
$\ket{\psi}$           and          $\ket{\psi'}$           the          quantum
$(h,\phi)$-divergence~\eqref{eq:Salicrudivergence} takes the form
\begin{equation}
\label{eq:Salicrudivergence_pure}
\mathbf{J}_{(h,\phi)}\left(\ketbra{\psi}{\psi},\ketbra{\psi'}{\psi'}\right) =
\Salicru\left( \left[\frac{1+|\braket{\psi}{\psi'}|}{2} \quad \frac{1-|
\braket{\psi}{\psi'}|}{2} \right]^t \right).
\end{equation}
Indeed, the  square root of  this quantity in  the von Neumann case  \ $[h(x)=x,
\phi(x)=-x\ln x]$ provides a metric for pure states~\cite{LamPor09}. Notice that
the  right-hand  side   of  Eq.~\eqref{eq:Salicrudivergence_pure}  is  a  binary
$(h,\phi)$-entropy.   Other basic  properties  and applications  of the  quantum
$(h,\phi)$-divergence are currently under study~\cite{BosBel16}.


\subsection{Specific properties of the quantum $(h,\phi)$-entropy }
\label{s:QEvolution}

We recall  that the quantum entropy  of a density operator  equals the classical
entropy of  the probability  vector formed by  its eigenvalues. In  other words,
considering  a density operator  as a  mixture of  orthonormal pure  states, its
quantum entropy coincides with the classical  entropy of the weights of the pure
states.  This  is not true  when the density  operator is not decomposed  in its
diagonal form, but  as a convex combination  of pure states that do  not form an
orthonormal basis.   The quantum $(h,\phi)$-entropy of  an arbitrary statistical
mixture of pure states, is  upper bounded by the classical $(h,\phi)$-entropy of
the probability vector formed by the mixture weights:
\begin{proposition}
\label{prop:staticalmixture}
Let  $\rho   =  \sum_{i=1}^M   p_i  \ketbra{\psi_i}{\psi_i}$  be   an  arbitrary
statistical mixture of pure  states $\ketbra{\psi_i}{\psi_i}$, with $p_i \geq 0$
and  $\sum_{i=1}^M  p_i=1$. Then,  the  quantum  $(h,\phi)$-entropy is  upper
bounded as
\begin{equation}
\label{eq:staticalmixture}
\SalicruQ(\rho) \leq \Salicru(p),
\end{equation}
where $p = [ p_1 \: \cdots \: p_M]^t$.
\end{proposition}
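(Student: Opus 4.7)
The plan is to reduce the quantum statement to a classical majorization inequality, and then invoke the Schur-concavity already proved in Proposition~\ref{prop:Schurconcavity}. Let $\lambda$ denote the eigenvalue vector of $\rho$ (padded with zeros so that it has the same length $M$ as $p$). Since $\SalicruQ(\rho) = \Salicru(\lambda)$ by the link~\eqref{eq:SalicruEntquantumdiagonal}, and since Proposition~\ref{prop:Schurconcavity} tells us that $p \prec \lambda$ implies $\Salicru(p) \geq \Salicru(\lambda)$, it suffices to establish the classical majorization relation
\[
p \prec \lambda .
\]
The whole proof therefore reduces to showing this single statement.

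To prove $p \prec \lambda$, I would introduce an ancilla space $\H^M$ with orthonormal basis $\{\ket{i}\}_{i=1}^M$, and consider the purification
\[
\ket{\Phi} \;=\; \sum_{i=1}^M \sqrt{p_i}\, \ket{\psi_i}\otimes \ket{i} \;\in\; \H^N\otimes\H^M .
\]
A direct calculation gives $\Tr_B \ketbra{\Phi}{\Phi} = \sum_i p_i \ketbra{\psi_i}{\psi_i} = \rho$, so that the reduced state on the ancilla, $\sigma := \Tr_A \ketbra{\Phi}{\Phi}$, has the same nonzero spectrum as $\rho$ (Schmidt decomposition of the purification). In other words, $\mathrm{spec}(\sigma) = \lambda$ (again up to the zero-padding convention). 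On the other hand, the diagonal entries of $\sigma$ in the basis $\{\ket{i}\}$ are
\[
\bra{i}\sigma\ket{i} \;=\; p_i\,\braket{\psi_i}{\psi_i} \;=\; p_i ,
\]
so the vector of diagonal entries of $\sigma$ is precisely $p$. The Schur--Horn theorem, which states that the diagonal of an Hermitian operator is majorized by its spectrum, then yields $p \prec \lambda$, as required.

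The main obstacle is conceptual rather than computational: one has to recognize that the content of the statement is essentially the Schur--Horn theorem in disguise, and that the right way to expose it is to encode the mixing weights $p_i$ as the diagonal of the ancilla reduction of a purification. Once this step is identified, everything else is bookkeeping: Proposition~\ref{prop:Schurconcavity} handles both admissible pairs $(h,\phi)$ uniformly (strictly increasing $h$ with strictly concave $\phi$, or strictly decreasing $h$ with strictly convex $\phi$), and the passage from the quantum entropy of $\rho$ to the classical entropy of its eigenvalues is already provided by~\eqref{eq:SalicruEntquantumdiagonal}. As a by-product, strict Schur-concavity tells us that equality holds in~\eqref{eq:staticalmixture} iff $\lambda = p^{\downarrow}$, which corresponds to the pure states $\ket{\psi_i}$ being mutually orthogonal on the support of $p$.
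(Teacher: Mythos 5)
Your proof is correct, and the reduction to the classical majorization $p \prec \lambda$ followed by Schur-concavity is exactly the skeleton of the paper's argument; the difference lies in how you establish $p \prec \lambda$. The paper invokes the Schr\"odinger mixture theorem directly: the unitary relating the mixture $\{p_i,\ket{\psi_i}\}$ to the spectral decomposition of $\rho$ yields $p = B\lambda$ with $B$ bistochastic, and the Hardy--Littlewood--P\'olya lemma then gives the majorization. You instead build the purification $\ket{\Phi}=\sum_i\sqrt{p_i}\,\ket{\psi_i}\otimes\ket{i}$, note that the ancilla reduction $\sigma$ shares the nonzero spectrum of $\rho$ while carrying $p$ on its diagonal, and apply the Schur--Horn theorem. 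The two routes are close cousins (the Schr\"odinger mixture theorem is itself usually proved via purifications), but yours is arguably more economical within this paper's toolkit, since Schur--Horn is already the engine of Proposition~\ref{prop:diagmap}; in effect you exhibit Proposition~\ref{prop:staticalmixture} as Proposition~\ref{prop:diagmap} applied to the ancilla. The paper's route, by contrast, makes the bistochastic structure $p=B\lambda$ explicit, which is reused in the multipartite argument of Proposition~\ref{prop:entcriteria multi}. Your closing remark on the equality case (off-diagonal terms of $\sigma$ must vanish, hence the $\ket{\psi_i}$ with $p_i>0$ are mutually orthogonal) is also correct, though not claimed in the statement.
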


\begin{proof}
  First,      we       recall      the      the       Schr\"odinger      mixture
  theorem~\cite[Th.~8.2]{BenZyc06}:  a  density operator  in  its diagonal  form
  $\rho  =  \sum_{i=1}^N  \lambda_i  \ketbra{e_i}{e_i}$  can be  written  as  an
  arbitrary  statistical  mixture  of  pure  states  $\rho  =  \sum_{i=1}^M  p_i
  \ketbra{\psi_i}{\psi_i}$, with  $p_i \geq 0$ and $\sum_{i=1}^M  p_i=1$, if and
  only if, there exist a unitary $M \times M$ matrix $U$ such that
\begin{equation}\label{eq:mixture theorem}
\sqrt{p_i} \, \ket{\psi_i} = \sum_{j=1}^N U_{ij} \sqrt{\lambda_j} \, \ket{e_j}.
\end{equation}
As a corollary, one directly has~\cite{Nie00}
\begin{equation}
\label{eq:pBlambda}
p = B \lambda,
\end{equation}
where $B_{ij}=  |U_{ij}|^2$ are  the elements of  the $M \times  M$ bistochastic
matrix\footnote{It is  assumed that $M \ge  N$, otherwise $p$  is completed with
  zeros;  when  $M >  N$,  the  remaining $N-M$  terms  that  do  not appear  in
  Eq.~\eqref{eq:mixture theorem}  are added in  order to fulfill the  unitary of
  $U$  and $\lambda$  is to  be  understood as  completed with  zeros (for  more
  details,     see     the     proof     of    the     Schr\"odinger     mixture
  theorem~\cite[p.~222-223]{BenZyc06}).}   $B$.   From   the  lemma   of  Hardy,
Littlewood  and  P\'olya~\cite[Lemma~2.1]{BenZyc06} or~\cite[Th.~A.4]{MarOlk11},
this is equivalent  to the majorization relation $p  \prec \lambda$.  Therefore,
from~\eqref{eq:SalicruEntquantumdiagonal}   and  the   Schur-concavity   of  the
classical   $(h,\phi)$-entropy,   we   immediately   have   $\SalicruQ(\rho)   =
\Salicru(\lambda) \leq \Salicru(p)$.
\end{proof}

The previous proposition is a natural generalization of a well-known property of
von Neumann entropy.  One can also show that a related inequality holds:
\begin{proposition}
\label{prop:diagmap}
Let $\{\ket{e_k}  \}_{k=1}^N$ be an  arbitrary orthonormal basis of  $\H^N$ and,
for  a  given  density operator  $\rho$  acting  on  $\H^N$,  let us  denote  by
$p^E(\rho)$ the  probability vector with elements $p^E_k(\rho)  = \bra{e_k} \rho
\ket{e_k}$, that is, the diagonal elements of $\rho$ related to that basis. Then
\begin{equation}
\SalicruQ(\rho) \le \Salicru(p^E(\rho)).
\end{equation}
\end{proposition}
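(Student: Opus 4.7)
The plan is to recognize that this is essentially the classical statement that the diagonal of a Hermitian operator is majorized by its spectrum, combined with the Schur-concavity already established in Proposition~\ref{prop:Schurconcavity}. The cleanest route, however, is to apply the Peierls inequality~\eqref{eq:Peierlineq}, which was already invoked in the proof of Proposition~\ref{prop:concavity}, directly to the basis $\{\ket{e_k}\}_{k=1}^N$ singled out in the statement.

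Concretely, I would split into the two cases of Definition~\ref{def:QuantumSalicru}. In case~(ii), $\phi$ is convex and $h$ is strictly decreasing; Peierls immediately gives
\begin{equation*}
\Tr\phi(\rho) \;\geq\; \sum_{k=1}^N \phi\!\bigl(\bra{e_k}\rho\ket{e_k}\bigr) \;=\; \sum_{k=1}^N \phi\!\bigl(p^E_k(\rho)\bigr),
\end{equation*}
and applying the decreasing $h$ to both sides yields $\SalicruQ(\rho)=h(\Tr\phi(\rho))\le h\!\left(\sum_k \phi(p^E_k(\rho))\right)=\Salicru(p^E(\rho))$. In case~(i), $\phi$ is concave and the reverse Peierls inequality $\Tr\phi(\rho)\le \sum_k \phi(\bra{e_k}\rho\ket{e_k})$ holds; composing with the increasing $h$ gives the same conclusion.

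An equivalent, more conceptual presentation I would mention briefly is the following: writing $\rho$ in its eigenbasis $\{\ket{e_i^\star}\}$, the diagonal components satisfy $p^E_k(\rho)=\sum_i |\braket{e_k}{e_i^\star}|^2\,\lambda_i$, i.e.\ $p^E(\rho)=D\lambda$ with $D_{ki}=|\braket{e_k}{e_i^\star}|^2$ doubly stochastic. By the Hardy--Littlewood--P\'olya lemma (already used in the proof of Proposition~\ref{prop:staticalmixture}) this implies $p^E(\rho)\prec \lambda$, and the Schur-concavity of $\Salicru$ together with the link~\eqref{eq:SalicruEntquantumdiagonal} then gives $\SalicruQ(\rho)=\Salicru(\lambda)\le \Salicru(p^E(\rho))$.

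There is no real obstacle here; the only thing to be careful about is handling the two sign conventions of Definition~\ref{def:QuantumSalicru} in parallel, so that the direction of the inequality in Peierls (or, equivalently, in Schur-concavity) is correctly tracked against the monotonicity of $h$. I would also note in passing that equality characterizations follow from the strict concavity/convexity of $\phi$, namely equality holds iff $D\lambda=\lambda$, which happens precisely when $\{\ket{e_k}\}$ diagonalizes $\rho$.
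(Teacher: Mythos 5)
Your proof is correct, and your primary route is genuinely different from the paper's. The paper proves this proposition in one line by invoking the Schur--Horn theorem (the diagonal of $\rho$ in any orthonormal basis is majorized by its spectrum, $p^E(\rho)\prec\lambda$) and then applying the Schur-concavity established in Proposition~\ref{prop:Schurconcavity}; your secondary, ``more conceptual'' sketch --- $p^E(\rho)=D\lambda$ with $D$ doubly stochastic, then Hardy--Littlewood--P\'olya --- is essentially that same proof, since it is exactly the standard derivation of the easy half of Schur--Horn. Your main argument via the Peierls inequality $\Tr\phi(\rho)\geq\sum_k\phi\bigl(\bra{e_k}\rho\ket{e_k}\bigr)$ (reversed for concave $\phi$), composed with the monotonicity of $h$, is a valid alternative that the paper does not use here, although it reuses the tool already introduced for Proposition~\ref{prop:concavity}. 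What each buys: Peierls is direct and self-contained, working at the level of $\Tr\phi$ without any detour through majorization; the Schur--Horn route fits the paper's organizing theme that majorization underlies all these properties, and it hands you the equality case for free from strict Schur-concavity. Your closing remark on equality is essentially right, though stated slightly loosely: strict Schur-concavity gives equality iff the sorted vector $p^E(\rho)$ coincides with $\lambda$ (i.e.\ $D\lambda$ is a permutation of $\lambda$, not necessarily $D\lambda=\lambda$), which one then checks forces $\{\ket{e_k}\}$ to diagonalize $\rho$.
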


\begin{proof}
  The decomposing  of $\rho$ in  the basis $\{\ket{e_k}\}_{k=1}^N$ has  the form
  $\rho = \sum_{k,l=1}^N \rho_{k,l}  \ketbra{e_k}{e_l}$ where the diagonal terms
  are       $\rho_{k,k}       =       p_k^E(\rho)$.        The       Schur--Horn
  theorem~\cite[Th.~12.4]{BenZyc06}  states that the  vector $p^E(\rho)$  of the
  diagonal  terms  of  $\rho$  is  majorized  by the  vector  $\lambda$  of  the
  eigenvalues  of  $\rho$.   Thus,  from  the Schur-concavity  property  of  the
  classical  $(h,\phi)$-entropy, we  have  $\SalicruQ(\rho) =  \Salicru(\lambda)
  \leq \Salicru \left(p^E(\rho)\right)$.
\end{proof}

We consider now  the effects of transformations.  Among  them, unitary operators
are  important  since  the time  evolution  of  an  isolated quantum  system  is
described by  a unitary  transformation (i.e.,\ implemented  via the action  of a
unitary operator on the state).  One may expect that a ``good'' entropic measure
remains unchanged under such a  transformation. This property, known to be valid
for von  Neumann and quantum R\'enyi entropies~\cite{MulDup13}  among others, is
fulfilled for the quantum $(h,\phi)$-entropies,  and even in a slightly stronger
form,  i.e.,\  for isometries.   We  recall that  an  operator  $U: \H^N  \mapsto
\H^{N'}$  is said  to be  \emph{isometric}  if it  is norm  preserving. This  is
equivalent to $U^\dag U = I$. On the  other hand, an operator is then said to be
\emph{unitary} if it  is both isometric and co-isometric, that  is, both $U$ and
$U^\dag$ are isometric.  When $U: \H^N \mapsto \H^N$ (both Hilbert spaces having
the   same   dimension)  is   isometric,   it   is   necessarily  unitary   (see
e.g.~\cite{Hal82}).
\begin{proposition}
\label{prop:unitary}
The quantum  $(h,\phi)$-entropy is invariant under  any isometric transformation
$\rho \mapsto U \rho U^\dag$ where $U$ is an isometric operator:
\begin{equation}
\label{eq:unitary}
\SalicruQ(U \rho U^\dag) = \SalicruQ(\rho).
\end{equation}
\end{proposition}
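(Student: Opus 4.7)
The plan is to reduce the proposition to the classical link~\eqref{eq:SalicruEntquantumdiagonal} by showing that $U\rho U^\dag$ has the same nonzero eigenvalues as $\rho$, and then to absorb the potential extra zero eigenvalues using the assumption $\phi(0)=0$.

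First, I would write the spectral decomposition $\rho = \sum_{i=1}^N \lambda_i \ketbra{e_i}{e_i}$ and define $\ket{f_i} = U\ket{e_i} \in \H^{N'}$. The isometry condition $U^\dag U = I$ immediately gives $\braket{f_i}{f_j} = \bra{e_i} U^\dag U \ket{e_j} = \delta_{ij}$, so $\{\ket{f_i}\}_{i=1}^N$ is an orthonormal family in $\H^{N'}$. Consequently, $U\rho U^\dag = \sum_{i=1}^N \lambda_i \ketbra{f_i}{f_i}$ is already written in diagonal form (up to completing the orthonormal set into a basis of $\H^{N'}$, which introduces only eigenvalues equal to $0$). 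Hence the eigenvalue vector of $U\rho U^\dag$ is $\lambda' = [\lambda_1 \: \cdots \: \lambda_N \: 0 \: \cdots \: 0]^t$ with $N'-N$ appended zeros.

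Next, I would invoke the link~\eqref{eq:SalicruEntquantumdiagonal} on both sides, obtaining $\SalicruQ(\rho) = \Salicru(\lambda)$ and $\SalicruQ(U \rho U^\dag) = \Salicru(\lambda')$. Since $\phi(0)=0$, the extra zero components contribute nothing to $\sum_i \phi(\lambda_i)$, i.e., the classical expansibility property yields $\Salicru(\lambda') = \Salicru(\lambda)$, which gives~\eqref{eq:unitary}.

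There is essentially no obstacle here; the argument is a direct combination of the spectral decomposition, the isometry identity $U^\dag U = I$, and the hypothesis $\phi(0)=0$. A mild subtlety worth mentioning is that $U$ need not be surjective when $N < N'$, so $U\rho U^\dag$ is not unitarily equivalent to $\rho$ in the strict sense; this is exactly what makes $\phi(0)=0$ necessary to discard the padded zeros. Alternatively, one could deduce the result as the equality case of Proposition~\ref{prop:Schurconcavity}, since $\lambda = \lambda'$ (up to zero padding) implies mutual majorization $\rho \prec U\rho U^\dag$ and $U\rho U^\dag \prec \rho$; but the direct computation above is both shorter and more transparent.
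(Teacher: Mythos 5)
Your proof is correct and follows essentially the same route as the paper: spectral decomposition of $\rho$, observing that $\ket{f_i}=U\ket{e_i}$ is orthonormal because $U^\dag U=I$, and invoking the link~\eqref{eq:SalicruEntquantumdiagonal}. You are in fact slightly more careful than the paper's own proof in treating the case $N<N'$, where the padded zero eigenvalues must be discarded via $\phi(0)=0$ (expansibility) rather than claiming the spectra are literally identical.
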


\begin{proof}
  Let  us write  $\rho$ in  its diagonal  form, $\rho  =  \sum_{i=1}^N \lambda_i
  \ketbra{e_i}{e_i}$.   Clearly,   $U  \rho  U^\dag   =  \sum_{i=1}^N  \lambda_i
  \ketbra{f_i}{f_i}$,  where $\ket{f_i} =  U \ket{e_i}$  with $i=1,  \ldots, N$,
  form an  orthonormal basis (due  to the fact  that $U$ is an  isometry). Since
  $\rho$  and  $U  \rho U^\dag$  have  the  same  eigenvalues, and  thus,  using
  Eq.~\eqref{eq:SalicruEntquantumdiagonal}, we conclude  that they have the same
  $(h,\phi)$-entropy.
\end{proof}

\

When dealing with a quantum system, it  is of interest to estimate the impact of
a quantum operation  on it. In particular, one may guess  that a measurement can
only perturb the state and, thus,  that the entropy will increase.  This is also
true for  more general quantum operations.   Moreover, one may  be interested in
quantum entropies as signatures of an arrow of time: to this end one can see how
the value of  an entropic measure changes under the action  of a general quantum
operation.   More  concretely,  let   us  consider  general  quantum  operations
represented by completely positive  and trace-preserving maps $\E$, expressed in
the Kraus form  $\E(\rho) = \sum_{k=1}^K A_k \rho  A_k^\dag$ \ (with $\{A_k^\dag
A_k\}$ satisfying the completeness relation $\sum_{k=1}^K A_k^\dag A_k = I$). It
can be shown that the behavior  of entropic measures depends nontrivially on the
nature  of  the quantum  operation  (see e.g.~\cite[Sec.~12.6]{BenZyc06}).   For
example, a completely  positive map increases the von  Neumann entropy for every
state if  and only if  it is  bistochastic, i.e.,\ if  it is also  unital ($\{A_k
A_k^\dag\}$  also satisfies the  completeness relation),  so that  the operation
leaves the maximally mixed state invariant.  This is no longer true for the case
of a stochastic  (but not bistochastic) quantum operation.  What  can be said of
the  generalized  quantum  $(h,\phi)$-entropies?   This  is  summarized  in  the
following:
\begin{proposition}
\label{prop:bimap}
Let  $\E$ be  a  \emph{bistochastic  map}.  Then,  the  quantum operation  $\rho
\mapsto  \E(\rho)$  can  only  degrade  the  information  (i.e.,\  increase  the
$(h,\phi)$-entropy):
\begin{equation}
\label{eq:bimap}
\SalicruQ(\rho) \leq \SalicruQ(\E(\rho))
\end{equation}
with equality if and  only if $\E(\rho) = U \rho U^\dag$  for a unitary operator
$U$.
\end{proposition}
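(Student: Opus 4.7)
The plan is to derive \eqref{eq:bimap} as a direct consequence of the Schur-concavity of $\SalicruQ$ established in Proposition~\ref{prop:Schurconcavity}, once I show that a bistochastic map induces majorization at the level of spectra, that is, $\E(\rho) \prec \rho$.

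First, to establish $\E(\rho)\prec\rho$ for any bistochastic $\E$, I would write the spectral decompositions $\rho = \sum_i \lambda_i \ketbra{e_i}{e_i}$ and $\E(\rho) = \sum_j \mu_j \ketbra{f_j}{f_j}$, and introduce the matrix with entries $D_{ji} = \bra{f_j} \E(\ketbra{e_i}{e_i}) \ket{f_j}$. A short computation then gives $\mu_j = \sum_i D_{ji}\,\lambda_i$, i.e.\ $\mu = D\lambda$. The matrix $D$ is doubly stochastic: complete positivity of $\E$ makes every entry nonnegative; trace preservation yields $\sum_j D_{ji} = \Tr \E(\ketbra{e_i}{e_i}) = 1$; and unitality, the second half of the bistochasticity hypothesis, gives $\sum_i D_{ji} = \bra{f_j} \E(I) \ket{f_j} = 1$. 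By the Hardy--Littlewood--P\'olya theorem already invoked in Proposition~\ref{prop:staticalmixture}, the relation $\mu = D\lambda$ with $D$ doubly stochastic is equivalent to $\mu \prec \lambda$, i.e.\ $\E(\rho) \prec \rho$.

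Next I would simply apply Proposition~\ref{prop:Schurconcavity}: since $\E(\rho) \prec \rho$, one immediately obtains $\SalicruQ(\E(\rho)) \ge \SalicruQ(\rho)$, which is \eqref{eq:bimap}. For the equality case, the strict Schur-concavity part of Proposition~\ref{prop:Schurconcavity} forces the ordered eigenvalue vectors of $\rho$ and $\E(\rho)$ to coincide; as both operators act on the same $\H^N$, this is equivalent to the existence of a unitary $U$ such that $\E(\rho) = U \rho U^\dag$.

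The main obstacle is really this first step: one must use \emph{both} trace preservation and unitality to exhibit the doubly stochastic matrix $D$ relating the two eigenvalue vectors. Trace preservation alone controls only the column sums of $D$, and without unitality $D$ need not be doubly stochastic; this is precisely why the conclusion fails for merely stochastic (non-bistochastic) quantum operations, as noted in the preceding paragraph of the paper.
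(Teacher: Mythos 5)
Your proof is correct and follows essentially the same route as the paper: reduce the claim to the majorization relation $\E(\rho)\prec\rho$ and then invoke the Schur-concavity of $\SalicruQ$ from Proposition~\ref{prop:Schurconcavity}, including the equality case. The only difference is that the paper cites the quantum Hardy--Littlewood--P\'olya lemma as a black box, whereas you prove it directly by exhibiting the doubly stochastic matrix $D_{ji}=\bra{f_j}\E(\ketbra{e_i}{e_i})\ket{f_j}$; that argument is sound and correctly identifies unitality as the ingredient controlling the row sums.
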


\begin{proof}
  From             the             quantum            Hardy--Littlewood--P\'olya
  theorem~\cite[Lemma.~12.1]{BenZyc06},  $\E(\rho)  \prec  \rho$,  so  that  the
  proposition  is   a  consequence  of   the  Schur-concavity  of   the  quantum
  $(h,\phi)$-entropy  (Proposition~\ref{prop:Schurconcavity}).  Let  us  mention
  that  an isometric  operator  can define  a  bistochastic map  only  if it  is
  unitary.
\end{proof}

This  is  a  well-known property  of  von  Neumann  entropy, when  dealing  with
projective measurements $A_k  A_l = \delta_{k,l} A_k$~\cite[Th.~11.9]{NieChu10}.
It turns out to  be true for the whole family of  $(h,\phi)$-entropies, and in a
more general context than projective  measurements.  However, as we have noticed
above, generalized  (but not bistochastic)  quantum operations can  decrease the
quantum    $(h,\phi)$-entropy.    Let    us   consider    the    example   given
in~\cite[Ex.~11.15,~p.~515]{NieChu10}.  Let $\rho$ be the density operator of an
arbitrary  qubit  system,  with  nonvanishing  quantum  $(h,\phi)$-entropy,  and
consider the generalized measurement performed by the measurement operators $A_1
= \ketbra{0}{0}$ and  $A_2 = \ketbra{0}{1}$ (a completely  positive map, but not
unital).  Then, the system after  this measurement is represented by \ $\E(\rho)
=  \ketbra{0}{0}  \rho  \ketbra{0}{0}   +  \ketbra{0}{1}  \rho  \ketbra{1}{0}  =
\ketbra{0}{0}$ with vanishing quantum $(h,\phi)$-entropy.

Note  that Proposition~\ref{prop:diagmap}  can  be viewed  as  a consequence  of
Proposition~\ref{prop:bimap}.  Indeed, it is straightforward to see that the set
of operators $E=\{\ketbra{e_k}{e_k}\}$ defines  the bistochastic map $\E(\rho) =
\sum_{k=1}^N           p^E_k(\rho)           \ketbra{e_k}{e_k}$.           Thus,
Proposition~\ref{prop:diagmap}    can   be    deduced    applying   successively
Proposition~\ref{prop:bimap} and Proposition~\ref{prop:staticalmixture}.

\

In  the  light of  the  previous discussions  and  results,  we can  reinterpret
Proposition~\ref{prop:diagmap}  as  follows:  the  quantum  $(h,  \phi)$-entropy
equals  the minimum  over the  set of  rank-one projective  measurements  of the
classical   $(h,   \phi)$-entropy   for   a  given   measurement   and   density
operator. Indeed, we  can extend the minimization domain to  the set of rank-one
positive operator valued measurements (POVMs)\footnote{  Recall that a POVM is a
  set $\{E_k\}$ of positive definite  operators satisfying the resolution of the
  identity}.  As  a consequence, we can  give an alternative  (and very natural,
from a physical perspective) definition for the $(h,\phi)$-entropies.
\begin{proposition}
\label{prop:povm1}
Let $\Eset$ be the set of all rank-one POVMs. Then
\begin{equation}
\SalicruQ(\rho) = \min_{E \in \Eset} \Salicru(p^E(\rho)),
\end{equation}
where $p^E(\rho)$ is the probability vector for the POVM $E = \{ E_k \}_{k=1}^K$
given the density operator $\rho$, i.e.,\ $p^E_k(\rho) = \Tr(E_k \rho)$.
\end{proposition}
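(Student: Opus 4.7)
The plan is to establish two inequalities: the easy achievability part, and the non-trivial lower bound by $\SalicruQ(\rho)$ valid for every rank-one POVM.

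For achievability, I would use the spectral decomposition $\rho = \sum_{k=1}^N \lambda_k \ketbra{e_k}{e_k}$ and take the rank-one projective measurement $E_k = \ketbra{e_k}{e_k}$, which is a particular element of $\Eset$. Then $p^E_k(\rho) = \Tr(E_k\rho) = \lambda_k$, so by the link~\eqref{eq:SalicruEntquantumdiagonal} one obtains $\Salicru(p^E(\rho)) = \Salicru(\lambda) = \SalicruQ(\rho)$, hence $\min_{E\in\Eset}\Salicru(p^E(\rho)) \le \SalicruQ(\rho)$.

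For the reverse inequality, I would prove that $p^E(\rho) \prec \lambda$ for every rank-one POVM $E = \{E_k\}_{k=1}^K$, where $\lambda$ is padded by zeros to length $K$, and then invoke the Schur-concavity of the classical $(h,\phi)$-entropy (used in the proof of Proposition~\ref{prop:Schurconcavity}) to conclude $\Salicru(p^E(\rho)) \ge \Salicru(\lambda) = \SalicruQ(\rho)$. To establish the majorization, reorder so that $p^E_{[1]}(\rho) \ge p^E_{[2]}(\rho) \ge \cdots$ and, for each $n$, write
\[
\sum_{k=1}^n p^E_{[k]}(\rho) \,=\, \Tr(E_S\,\rho), \qquad E_S = \sum_{k=1}^n E_{\sigma(k)},
\]
for the corresponding permutation $\sigma$. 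Since each $E_{\sigma(k)}$ is rank one and positive, $E_S$ has rank at most $n$; since $E_S \le \sum_k E_k = I$, its eigenvalues $\mu_i$ lie in $[0,1]$. Writing $E_S = \sum_i \mu_i \ketbra{g_i}{g_i}$ gives $\Tr(E_S\rho) = \sum_i \mu_i \bra{g_i}\rho\ket{g_i} \le \Tr(P\rho)$, where $P$ is the projector onto $\mathrm{span}\{\ket{g_i}:\mu_i>0\}$, a subspace of dimension $\le n$. Ky Fan's maximum principle yields $\Tr(P\rho) \le \sum_{j=1}^n \lambda_j$, and equality of the total sums $\sum_k p^E_k(\rho) = \Tr\rho = 1 = \sum_j \lambda_j$ completes the majorization relation.

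The main obstacle is the majorization step for general rank-one POVMs (as opposed to rank-one projective measurements treated in Proposition~\ref{prop:diagmap}); once that is in place, Schur-concavity and the spectral choice above combine immediately to give the claimed identity. A cautionary remark I would add: dropping the rank-one assumption would break the argument, since $E_S$ could have rank larger than $n$, so the result is genuinely specific to rank-one POVMs.
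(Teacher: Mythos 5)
Your proof is correct, but it takes a genuinely different route from the paper's. The paper reduces the hard inequality to two previously established results: it builds from the rank-one POVM the bistochastic channel $\E_E(\rho)=\sum_k E_k^{1/2}\rho\,E_k^{1/2}=\sum_k p^E_k(\rho)\,\ketbra{\psi_k}{\psi_k}$ (the rank-one hypothesis is what makes the Kraus outputs pure with weights exactly $p^E_k(\rho)$), and then chains Proposition~\ref{prop:bimap} (bistochastic maps increase the entropy, via the quantum Hardy--Littlewood--P\'olya theorem) with Proposition~\ref{prop:staticalmixture} (the entropy of a mixture is bounded by the classical entropy of its weights, via the Schr\"odinger mixture theorem). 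You instead prove the underlying majorization $p^E(\rho)\prec\lambda$ directly: the partial sum of the $n$ largest probabilities is $\Tr(E_S\rho)$ with $E_S$ a sum of $n$ rank-one positive operators, hence $\rank E_S\le n$ and $0\le E_S\le I$, so $\Tr(E_S\rho)\le\Tr(P\rho)\le\sum_{j=1}^n\lambda_j$ by Ky Fan's maximum principle; Schur-concavity then finishes the job. The achievability step (eigenbasis projective measurement) is the same in both. Your argument is more self-contained and makes the single majorization relation explicit, at the cost of importing Ky Fan's principle, which the paper does not use; the paper's version is shorter given its earlier propositions and exhibits the result as a corollary of the channel-theoretic machinery. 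Both correctly isolate the rank-one hypothesis as essential (in your case to control $\rank E_S$, in the paper's to obtain pure Kraus outputs), consistent with the counterexample given after the proposition.
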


\begin{proof}
  Let  us consider  an  arbitrary rank-one  POVM  $E =  \{  E_k \}_{k=1}^K$  and
  consider the positive operators $A_k = A_k^\dag = E_k^{\frac12}$.  Let us then
  define  $$  \E_E(\rho)  =  \sum_{k=1}^K  E_k^{\frac12}  \rho  E_k^{\frac12}  =
  \sum_{k=1}^K   p^E_k(\rho)  \,  \frac{E_k^{\frac12}   \rho  E_k^{\frac12}}{\Tr
    E_k^{\frac12}   \rho   E_k^{\frac12}  }   =   \sum_{k=1}^K  p^E_k(\rho)   \,
  \ketbra{\psi_k}{\psi_k},$$ where we have used the fact that $E_k$ is rank-one,
  so   its  square-root   can  be   written   in  the   form  $E_k^{\frac12}   =
  \ketbra{\tilde{e}_k}{\tilde{e}_k}$  (with $\ket{\tilde{e}_k}$  not necessarily
  normalized),  allowing  us  to  introduce  the  pure  states  $\ket{\psi_k}  =
  \frac{\ket{\tilde{e}_k}}{\braket{\tilde{e}_k}{\tilde{e}_k}^{\frac12}}$.    From
  the completeness relation satisfied by the POVM, $\E_E(\rho)$ is then a doubly
  stochastic map.  Thus,  applying successively Proposition~\ref{prop:bimap} and
  Proposition~\ref{prop:staticalmixture}   we   obtain   $$\SalicruQ(\rho)   \le
  \SalicruQ(\E(\rho))  \le \Salicru(p^E(\rho)).  $$ Since  $E$ is  arbitrary, we
  thus  have  $$\SalicruQ(\rho) \le  \min_{E  \in \Eset}  \Salicru(p^E(\rho)).$$
  Consider then $E_{\min} = \{ \ketbra{e_k}{e_k} \}_{k=1}^N$ where $\{ \ket{e_k}
  \}_{k=1}^N$   is    the   orthonormal   basis    that   diagonalizes   $\rho$.
  Thus $$\Salicru(p^{E_{\min}}(\rho))  = \SalicruQ(\rho) \le  \min_{E \in \Eset}
  \Salicru(p^E(\rho)) \le \Salicru(p^{E_{\min}}(\rho)),$$ which ends the proof.
\end{proof}

We notice that the alternative definition of quantum $(h,\phi)$-entropy given in
this proposition, can not be extended to any POVM.  The following counterexample
shows  this  impossibility.   Let  us  consider the  density  operator  $\rho  =
\frac{I_N}{N}$ with $N>2$ even, and the POVM $E =\left\{E_1,E_2 \right\}$ formed
by the positive  operators $E_1= \sum_{i=1}^{\frac{N}{2}} \ketbra{e_i}{e_i}$ and
$E_2       =       \sum_{i=\frac{N}{2}+1}^N      \ketbra{e_i}{e_i}$,       where
$\{\ket{e_i}\}_{i=1}^N$ is  an arbitrary orthonormal basis of  $\H^N$.  Thus, we
obtain  $p^E(\rho) = \left[  \frac12 \quad  \frac12 \right]^t$  and consequently
from   the   Schur-concavity   and    the   expansibility   of   the   classical
$(h,\phi)$-entropy  we have  $\SalicruQ(\rho)  = h\left(N  \phi\left(\frac{1}{N}
  \right)   \right)   >  h\left(2   \phi\left(\frac{1}{2}   \right)  \right)   =
\Salicru(p^E(\rho))$.


\subsection{Composite  systems  I: additivity,  sub  and superadditivities,  and
  bipartite pure states}
\label{s:CompositeSystemsI}

We  focus  now  on  some  properties of  the  quantum  $(h,\phi)$-entropies  for
bipartite  quantum systems  $AB$ represented  by density  operators acting  on a
product Hilbert space $\H_{AB}  = \H_A^{N_A} \otimes \H_B^{N_B}$.  Specifically,
we  are interested  in the  behavior  of the  entropy of  the composite  density
operator $\rho^{AB}$, with  reference to the entropies of  the density operators
of the subsystems\footnote{By definition,  the partial trace operation over $B$,
  \ $\Tr_B: \H_A^{N_A} \otimes \H_B^{N_B} \rightarrow \H_A^{N_A}$, is the unique
  linear operator such that \ $\Tr_B X_A  \otimes X_B = ( \Tr_B X_B)X_A$ for all
  $X_A$ and  $X_B$ acting on  $\H_A^{N_A}$ and $\H_B^{N_B}$,  respectively.  For
  instance,  let   us  consider  the   bases  $\{\ket{e_i^A}\}_{i=1}^{N_A}$  and
  $\{\ket{e_j^B}\}_{j=1}^{N_B}$  of $\H_A^{N_A}$ and  $\H_B^{N_B}$ respectively,
  and  the product  basis $\{\ket{e_i^A}  \otimes \ket{e_j^B}\}$  of $\H_A^{N_A}
  \otimes \H_B^{N_B}$.  Let us denote  by $\rho^{AB}_{i j,i' j'}$ the components
  in the product basis of  an operator $\rho^{AB}$ acting on $\H_A^{N_A} \otimes
  \H_B^{N_B}$.   Thus, the  partial  trace  over $B$  of  $\rho^{AB}$ gives  the
  density  operator of  the subsystem  $A$,  $\rho^A =  \Tr_B \rho^{AB}$,  whose
  components  are $\rho^A_{i,i'}  = \sum_j  \rho^{AB}_{i j,i'  j}$ in  the basis
  $\{\ket{e_i^A}\}$.},  $\rho^A   =  \Tr_B   \rho^{AB}$  and  $\rho^B   =  \Tr_A
\rho^{AB}$.

\

Now,  we give  sufficient  conditions  for the  additivity  property of  quantum
$(h,\phi)$-entropies:
\begin{proposition}
\label{prop:additivityH}
Let  $\rho^A \otimes  \rho^B$  be an  arbitrary  product density  operator of  a
composite  system $AB$,  and  $\rho^A$ and  $\rho^B$  the corresponding  density
operators of  the subsystems.  If, for $(a,b)  \in \left(0 \, ,  \, 1 \right]^2$
and    $(x,y)   \in   \left[\min\left\{\phi(1),N_A\phi\left(\frac{1}{N_A}\right)
  \right\}, \max\left\{\phi(1),N_A\phi\left(\frac{1}{N_A}\right) \right\}\right]
\times    \left[\min\left\{\phi(1),N_B\phi\left(\frac{1}{N_B}\right)   \right\},
  \max\left\{\phi(1),N_B\phi\left(\frac{1}{N_B}\right)  \right\}\right]$, $\phi$
and $h$ satisfy  the Cauchy functional equations either  of the form (i)~$\phi(a
b)  =  \phi(a) b+  a  \phi(b)$ and  $h(x+y)  =  h(x) +  h(y)$,  or  of the  form
(ii)~$\phi(ab)  =   \phi(a)  \phi(b)$  and   $h(xy)  =  h(x)+h(y)$.    Then  the
$(h,\phi)$-entropy satisfies the additivity property
\begin{equation}
\label{eq:additivityH}
\SalicruQ(\rho^A \otimes \rho^B) = \SalicruQ(\rho^A) + \SalicruQ(\rho^B)
\end{equation}
\end{proposition}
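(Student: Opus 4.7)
The plan is to reduce everything to the classical identity $\Salicru(\lambda^A\otimes\lambda^B)=\Salicru(\lambda^A)+\Salicru(\lambda^B)$ using the spectral decompositions, and then leverage the two Cauchy functional equations in exactly the same way one does for the Shannon/Rényi case.

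First I would write $\rho^A=\sum_i\lambda_i^A\ketbra{e_i^A}{e_i^A}$ and $\rho^B=\sum_j\lambda_j^B\ketbra{e_j^B}{e_j^B}$ so that the product density operator is diagonal in the product basis with eigenvalues $\lambda_i^A\lambda_j^B$. By the definition of the functional calculus this gives immediately
\begin{equation*}
\Tr\,\phi(\rho^A\otimes\rho^B)=\sum_{i,j}\phi(\lambda_i^A\lambda_j^B),
\end{equation*}
and each factor $\lambda_i^A,\lambda_j^B$ lives in $(0,1]$ (the zero eigenvalues drop out thanks to $\phi(0)=0$), so the hypothesis on the Cauchy equation for $\phi$ applies term by term.

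Next I would split into the two cases. In case (i), applying $\phi(ab)=\phi(a)b+a\phi(b)$ and then summing, the factorised structure together with $\sum_i\lambda_i^A=\sum_j\lambda_j^B=1$ collapses the double sum to $\Tr\,\phi(\rho^A)+\Tr\,\phi(\rho^B)$. In case (ii), applying $\phi(ab)=\phi(a)\phi(b)$ turns the double sum into the product $\bigl(\Tr\,\phi(\rho^A)\bigr)\bigl(\Tr\,\phi(\rho^B)\bigr)$. In either case I would then apply $h$ and use the corresponding Cauchy equation $h(x+y)=h(x)+h(y)$ or $h(xy)=h(x)+h(y)$ to split it as $h(\Tr\,\phi(\rho^A))+h(\Tr\,\phi(\rho^B))=\SalicruQ(\rho^A)+\SalicruQ(\rho^B)$, giving~\eqref{eq:additivityH}.

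The only subtle point — and this is why the statement carries such a bulky domain specification — is to make sure the Cauchy equations are invoked on arguments for which they actually hold. For $\phi$ this is direct because the eigenvalues are in $(0,1]$, but for $h$ the arguments are $\Tr\,\phi(\rho^A)$ and $\Tr\,\phi(\rho^B)$, which by Schur-concavity lie between $\phi(1)$ and $N_{A,B}\phi(1/N_{A,B})$ (with the orientation depending on whether $\phi$ is concave or convex); this is exactly the interval written in the hypothesis. I would point this out explicitly at the step where $h$ is applied, so that the Cauchy equation is used within its declared domain. That matching of domains is really the only nontrivial bookkeeping — the algebra itself is a one-line computation once the spectral decomposition is in hand.
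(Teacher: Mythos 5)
Your proposal is correct and follows essentially the same route as the paper: reduce to the spectral decomposition, apply the Cauchy equation for $\phi$ to the product eigenvalues (the paper phrases this as the operator identity $\phi(\rho^A\otimes\rho^B)=\phi(\rho^A)\otimes\rho^B+\rho^A\otimes\phi(\rho^B)$, resp.\ $\phi(\rho^A)\otimes\phi(\rho^B)$), use $\Tr\rho^A=\Tr\rho^B=1$, and then split $h$ via its Cauchy equation on the stated domain. Your explicit handling of the zero eigenvalues via $\phi(0)=0$ and of the domain of validity for $h$ is a slightly more careful writing of the same argument.
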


\begin{proof}
  In case~(i), by  writing the density operators $\rho^A$  and $\rho^B$ in their
  diagonal forms, it is straightforward to obtain
  \[
  \phi(\rho^A  \otimes \rho^B) =  \phi(\rho^A) \otimes  \rho^B +  \rho^A \otimes
  \phi(\rho^B),
  \]
  and thus
  \[
  h(\Tr \phi(\rho^A \otimes \rho^B)) =  h(\Tr \phi(\rho^A) + \Tr \phi(\rho^B)) =
  h(\Tr \phi(\rho^A)) + h(\Tr \phi(\rho^B))
  \]
  where we used $\Tr \rho^A = 1 = \Tr \rho^B$. Similarly, for case~(ii),
  \[
  \phi(\rho^A \otimes \rho^B) = \phi(\rho^A) \otimes \phi(\rho^B)
  \]
  and thus
  \[
  h(\Tr \phi(\rho^A \otimes \rho^B)) = h(\Tr \phi(\rho^A) \, \Tr \phi(\rho^B)) =
  h(\Tr \phi(\rho^A)) + h(\Tr \phi(\rho^B)).
  \]
  The  domains  where  the  functional   equations  have  to  be  satisfied  are
  respectively the  domain of definition of  $\phi$ and the image  of $\Tr \phi$
  (see Proposition~\ref{prop:boundedentropies}).
\end{proof}

Note that, on  the one hand, in case~(i) the functional  equation for $\phi$ can
be recast as $g(ab) = g(a) + g(b)$ with $g(x)=x^{-1} \phi(x)$.  Thus, $\phi(x) =
c_1 x \ln x$ and $h(x) = c_2 x$ with $c_1 c_2 < 0$ are entropic functionals that
are solutions  of the functional equations~(i)~\footnote{Notice  that the Cauchy
  equations $g(x+y) = g(x) + g(y)$,  $g(xy) = g(x)+g(y)$ and $g(xy) = g(x) g(y)$
  are  not necessarily linear,  logarithmic or  power type  respectively without
  additional assumptions on the domain where they are satisfied and on the class
  of admissible functions (see  e.g.~\cite{Cau21, Kuc09}).  But, recall that the
  entropic functionals $h$  and $\phi$ are continuous and  either increasing and
  concave,  or decreasing  and convex.}.   These solutions  lead to  von Neumann
entropy, which, as it is well known, is additive (see e.g.~\cite{Lie75, Weh78}).
On the other hand, in case~(ii), $\phi(x)  = x^\alpha$ and $h(x) = c \ln x$ with
$0 < \alpha < 1$ and $c > 0$ or $\alpha > 1$ and $c < 0$ are entropic functional
solutions. This is  the case for the R\'enyi entropies, which  are also known to
be    additive    (see     e.g.~\cite{MulDup13}).     In    general,    however,
$(h,\phi)$-entropies are  not additive,  for instance quantum  unified entropies
(including quantum Tsallis  entropies) do not satisfy this  property for all the
possible values of the entropic parameters~\cite{HuYe06, Ras11:06}. For the (not
so general) quantum $(f,\alpha)$-entropies, we can give necessary and sufficient
conditions for the additivity property:
\begin{proposition}
\label{prop:additivityF}
Let  $\rho^A \otimes  \rho^B$  be an  arbitrary  product density  operator of  a
composite  system $AB$,  and  $\rho^A$ and  $\rho^B$  the corresponding  density
operators of the subsystems.  Then, for any $\alpha > 0$ the additivity property
\begin{equation}
\label{eq:additivityF}
\GenFQ(\rho^A \otimes \rho^B) = \GenFQ(\rho^A) + \GenFQ(\rho^B)
\end{equation}
holds if  and only  if $f(xy) =  f(x) +  f(y)$ \ for  $(x,y) \in [  \min \{  1 ,
N_A^{1-\alpha} \} \, , \, \max \{1 ,  N_A^{1-\alpha} \} ] \, \times \, [ \min \{
1 , N_B^{1-\alpha} \} \, , \, \max \{1 , N_B^{1-\alpha} \} ]$.
\end{proposition}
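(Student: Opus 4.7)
The plan is to reduce everything to a computation with $\Tr\rho^\alpha$ and then use a surjectivity argument to see which $(x,y)$ pairs actually appear.

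First I would note that the eigenvalues of $\rho^A\otimes\rho^B$ are the products $\lambda_i^A\lambda_j^B$ of the eigenvalues of the factors, so a direct computation gives
\[
\Tr(\rho^A\otimes\rho^B)^{\alpha}=\sum_{i,j}(\lambda_i^A\lambda_j^B)^{\alpha}=\Tr(\rho^A)^{\alpha}\,\Tr(\rho^B)^{\alpha}.
\]
Writing $x=\Tr(\rho^A)^{\alpha}$ and $y=\Tr(\rho^B)^{\alpha}$ and applying the definition of $\GenFQ$ (recall $\GenFQ(\rho)=\frac{1}{1-\alpha}f(\Tr\rho^{\alpha})$), additivity~\eqref{eq:additivityF} is equivalent to
\[
\frac{1}{1-\alpha}\,f(xy)=\frac{1}{1-\alpha}\bigl[f(x)+f(y)\bigr],
\]
which (since $\alpha\neq 1$) is precisely $f(xy)=f(x)+f(y)$ for the specific pair $(x,y)$ produced by $(\rho^A,\rho^B)$. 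The \emph{if} direction is therefore immediate.

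For the \emph{only if} direction I would identify exactly the set of values $x$ attained by $\Tr\rho^{\alpha}$ when $\rho$ runs over density operators on $\H^{N_A}$ (and analogously for $N_B$). By Schur-concavity/convexity of $\lambda\mapsto\sum_i\lambda_i^{\alpha}$ on the simplex—equivalently, by the standard majorization chain $[1/N\:\cdots\:1/N]^{t}\prec\lambda\prec[1\:0\:\cdots\:0]^{t}$ already invoked in~\eqref{eq:Entlowerupperbounded}—the extreme values of $\Tr\rho^{\alpha}$ are $1$ (at pure states) and $N_A^{1-\alpha}$ (at $\rho=I_{N_A}/N_A$). A continuity/IVT argument along a one-parameter family interpolating between a pure state and the maximally mixed state then shows that every value in $[\min\{1,N_A^{1-\alpha}\},\max\{1,N_A^{1-\alpha}\}]$ is attained, and similarly for $B$. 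Since $\rho^A$ and $\rho^B$ can be chosen independently, the pairs $(x,y)$ cover the full rectangle stated in the proposition, and requiring additivity for all $(\rho^A,\rho^B)$ forces the Cauchy equation $f(xy)=f(x)+f(y)$ on that whole rectangle.

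The only genuinely nontrivial step is the range identification for $\Tr\rho^{\alpha}$; everything else is an algebraic rearrangement. I do not expect any subtlety there because the map $\rho\mapsto\Tr\rho^{\alpha}$ is continuous on the (path-)connected set of density operators and its extremal values on the simplex of eigenvalues are well known, so the IVT argument is routine and does not require the strict concavity/monotonicity of the entropic functionals. Note also that the case $\alpha=1$ is excluded by definition of the $(f,\alpha)$-entropies, so the division by $1-\alpha$ above is legitimate throughout.
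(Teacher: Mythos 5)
Your proposal is correct and follows essentially the same route as the paper: the ``if'' direction reduces to the identity $\Tr(\rho^A\otimes\rho^B)^{\alpha}=\Tr(\rho^A)^{\alpha}\,\Tr(\rho^B)^{\alpha}$ (which the paper obtains via Proposition~\ref{prop:additivityH}, case~(ii)), and the ``only if'' direction rests on identifying the image of $\rho\mapsto\Tr\rho^{\alpha}$ as the interval $[\min\{1,N^{1-\alpha}\},\max\{1,N^{1-\alpha}\}]$. Your explicit majorization-plus-connectedness argument for that range is exactly the step the paper compresses into ``analyzing the image of $\Tr\rho^{\alpha}$,'' so you have merely made the published proof more self-contained.
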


\begin{proof}
  The `if'  part is  a direct consequence  of Proposition~\ref{prop:additivityH}
  where $\phi(x)  = x^{\alpha}$ and  $h(x) = \frac{f(x)}{1-\alpha}$  satisfy the
  Cauchy equations of condition~(ii).

  Reciprocally, if $\GenFQ$  is additive, we necessarily have  that $f\left( \Tr
    \left( \rho^A  \right)^\alpha \Tr \left(  \rho^B \right)^\alpha \right)  = f
  \left( \Tr \left(  \rho^A \right)^\alpha \right) + f  \left( \Tr \left( \rho^B
    \right)^\alpha \right)$ for any pair  of arbitrary states. Denoting $x = \Tr
  \left( \rho^A \right)^\alpha$  and $y = \Tr \left(  \rho^B \right)^\alpha$ and
  analyzing the  image of $\Tr \rho^\alpha$  for any density  operator acting on
  $\H^N$, we necessarily have $f(xy) = f(x) + f(y)$ over the domain specified in
  the proposition, which ends the proof.
\end{proof}

Notice  that,  if  $f$  is  twice  differentiable, one  can  show  that  $f$  is
proportional to  the logarithm  thus, among the  quantum $(f,\alpha)$-entropies,
only the von Neumann and quantum R\'enyi entropies are additive.

As we have seen, the $(h,\phi)$-entropies are, in general, nonadditive. However,
as suggested in~\cite{Rag95}, two types of subadditivity and superadditivity can
be of interest.  One of them compares the entropy of $\rho^{AB}$ with the sum of
the entropies of the subsystems $\rho^A$  and $\rho^B$ (global entropy vs sum of
marginal-entropies), and the other one  compares the entropy of $\rho^{AB}$ with
that  of  the   product  state  $\rho^A  \otimes  \rho^B$   (global  entropy  vs
product-of-marginals entropy).  The general  study of subadditivity of the first
type,  $\SalicruQ(\rho^{AB})  \leq  \SalicruQ(\rho^A) +  \SalicruQ(\rho^B)$,  is
difficult,  even if  one is  looking for  sufficient conditions  to  insure this
subadditivity.  Although it is not valid in general, there are certain cases for
which it holds.  For example, it holds for the von Neumann entropy~\cite{Weh78},
quantum  unified  entropies for  a  restricted  set of  parameters~\cite{HuYe06,
  Ras11:06},    and   quantum   Tsallis    entropy   with    parameter   greater
than~1~\cite{Rag95, Aud07}.  On the other hand, it is possible to show that only
the  von   Neumann  entropy  (or   an  increasing  function  of   it)  satisfies
subadditivity of the  second type, provided that some  smoothness conditions are
imposed on $\phi$.  This is summarized in the following:
\begin{proposition}
\label{prop:NoSubadditivity}
Let $\rho^{AB}$ be  a density operator of a composite  system $AB$, and $\rho^A$
and $\rho^B$ the corresponding density operators of the subsystems.  Assume that
$\phi$  is twice  differentiable  on  $(0 \,  ,  \,1)$.  The  $(h,\phi)$-entropy
satisfies
\begin{equation}
\label{eq:NoSubadditivity}
\SalicruQ(\rho^{AB}) \leq \SalicruQ(\rho^A \otimes \rho^B)
\end{equation}
if and only if $\SalicruQ$ is an increasing function of the von Neumann entropy,
given by $\phi(x) = - x \ln x$.
\end{proposition}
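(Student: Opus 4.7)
I propose the following plan. For the sufficient direction, if $\phi(x)=-x\ln x$ then $\Tr\phi(\rho)=S(\rho)$ is the von~Neumann entropy and $\SalicruQ(\rho)=h(S(\rho))$ is an increasing function of $S$ (here $h$ is necessarily strictly increasing since $\phi$ is concave). The classical subadditivity of the von~Neumann entropy together with its additivity on product states, $S(\rho^{AB})\le S(\rho^A)+S(\rho^B)=S(\rho^A\otimes\rho^B)$, then delivers \eqref{eq:NoSubadditivity} upon applying the increasing function $h$.

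For the necessity, I would reduce to the classical sector. Fix orthonormal bases of $\H_A^{N_A}$ and $\H_B^{N_B}$, take $\rho^{AB}$ diagonal in the product basis with entries $p_{ij}$, and observe that $\rho^A$, $\rho^B$ are then diagonal with the classical marginals $p_i^A=\sum_j p_{ij}$, $p_j^B=\sum_i p_{ij}$, while $\rho^A\otimes\rho^B$ is diagonal with eigenvalues $p_i^A p_j^B$. Via \eqref{eq:SalicruEntquantumdiagonal} and the strict monotonicity of $h$, \eqref{eq:NoSubadditivity} says exactly that, for every choice of strictly positive marginals $p^A,p^B$, the product distribution is an extremum of $p\mapsto\sum_{ij}\phi(p_{ij})$ over joint distributions with these marginals --- a maximum when $\phi$ is concave and a minimum when $\phi$ is convex.

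A Lagrange multiplier computation at this critical point in the relative interior of the simplex yields $\phi'(p_i^A p_j^B)=\mu_i^A+\mu_j^B$ for some multipliers. Since the marginals may be chosen arbitrarily, this forces the functional equation $\phi'(xy)=\alpha(x)+\beta(y)$ on $(0,1]\times(0,1]$. Differentiating in $x$ yields $y\phi''(xy)=\alpha'(x)$; setting $y=1$ gives $\alpha'=\phi''$, whence $\phi''(xy)=\phi''(x)/y$ and, fixing $x$, $\phi''(u)=c/u$ for some constant $c$. Integrating twice and enforcing $\phi(0)=0$ (using $\lim_{u\to0^+}u\ln u=0$) gives $\phi(u)=c\,u\ln u+\ell u$ for some real $\ell$. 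The linear piece contributes $\ell\Tr\rho=\ell$ to $\Tr\phi$ and therefore cancels on both sides of \eqref{eq:NoSubadditivity}, so up to an additive constant $\Tr\phi(\rho)=-c\,S(\rho)$; the sign of $c$ is then fixed by the prescribed concavity/convexity of $\phi$. In either admissible case $\SalicruQ(\rho)=h(\mathrm{const}-c\,S(\rho))$ is an increasing function of the von~Neumann entropy, so $\phi$ is, up to the monotone transformation absorbed into $h$, equal to $-x\ln x$, as claimed.

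The main obstacle is the variational step: one must guarantee that the product distribution really lies in the relative interior of the polytope of joint distributions with prescribed marginals (secured by choosing marginals with all strictly positive components) and that the identity $\phi'(p_i^A p_j^B)=\mu_i^A+\mu_j^B$ holds for a sufficiently rich family of marginals to yield the functional equation on the full domain $(0,1]^2$. The $C^2$ hypothesis on $\phi$ is exactly what is needed to extract $\phi''(u)=c/u$ and obtain the clean classification; without it, one would be stuck with a Cauchy-type equation and would face pathological non-measurable solutions.
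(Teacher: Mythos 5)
Your argument is correct, and it arrives at exactly the same functional equation as the paper before solving it in the same way; the difference lies only in how that equation is extracted. The paper restricts to an explicit one-parameter family of diagonal two-qutrit states, $\rho^{AB}=\rho^A\otimes\rho^B-c\,(\ketbra{00}{00}+\ketbra{11}{11}-\ketbra{01}{01}-\ketbra{10}{10})$, and uses Klein's inequality to show that unless $\phi'(ab)+\phi'(\alpha\beta)-\phi'(a\beta)-\phi'(\alpha b)$ vanishes identically, the sign of $\Tr\phi(\rho^A\otimes\rho^B)-\Tr\phi(\rho^{AB})$ can be made wrong. You instead note that the hypothesis forces the product distribution to be an interior extremum of $p\mapsto\sum_{ij}\phi(p_{ij})$ over the transportation polytope with fixed marginals, and read off the stationarity condition $\phi'(p_i^A p_j^B)=\mu_i+\nu_j$. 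These are the same computation in different clothing: the paper's perturbation direction is precisely a generator of the tangent space of your polytope, and its observation that $c\mapsto c\,g(a,\alpha,b,\beta,c)$ cannot keep a fixed sign is your first-order Taylor argument. Your variational framing is arguably cleaner and explains \emph{why} the counterexamples exist, at the price of the interiority and richness-of-marginals caveats you correctly flag. Two points to tighten. First, the multipliers $\mu_i,\nu_j$ depend on the chosen marginal vectors, so what you actually obtain is the rectangular identity $\phi'(ab)+\phi'(\alpha\beta)=\phi'(a\beta)+\phi'(\alpha b)$ on the set where $a,\alpha$ (resp.\ $b,\beta$) are distinct positive components of one marginal, i.e.\ $a+\alpha\le 1$ and $b+\beta\le 1$; to have two \emph{freely} chosen positive components you need marginals of length at least three, which is exactly why the paper works with qutrits. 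Second, the value $y=1$ is not attained by such components, so instead of setting $y=1$ in $y\,\phi''(xy)=\alpha'(x)$ you should compare two admissible values, $y_1\phi''(xy_1)=y_2\phi''(xy_2)$, and multiply by $x$ to conclude that $u\,\phi''(u)$ is constant on $(0,1)$; the rest of your integration, the absorption of the linear term via $\Tr\rho=1$, and the sign analysis of $c$ against the prescribed concavity/convexity of $\phi$ and monotonicity of $h$ all go through as you describe.
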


\begin{proof}
  The proof is based on two steps:
  \begin{enumerate}
  \item[$\bullet$] First, an example of a two qutrit diagonal system acting on a
    Hilbert space $\H^3  \otimes \H^3$ is presented, for which  it is shown that
    $\SalicruQ$ cannot  be subadditive, with the exception  of certain functions
    $\phi'$ satisfying a given functional equation.
  \item[$\bullet$]  Next,   under  the  assumptions  of   the  proposition,  the
    functional  equation  is solved,  and  it is  shown  that  all the  entropic
    functionals $\phi$ for  which we could not conclude  on the subadditivity of
    $\SalicruQ$,  can be  reduced to  the case  $\phi(x) =  - x  \ln x$  and $h$
    increasing.
  \end{enumerate}

  \

  {\bf Step 1}.  Consider the composite two qutrit systems acting on a Hilbert
  space $\H^3 \otimes \H^3$, of the form
  \[
  \rho^{AB} = \rho^A \otimes \rho^B  - c \big( \ketbra{00}{00} + \ketbra{11}{11}
  - \ketbra{10}{10} - \ketbra{01}{01} \big)
    \]
  with
  \[
  \rho^A  =  a  \, \ketbra{0}{0}  \,  +  \,  \alpha  \,  \ketbra{1}{1} \,  +  \,
  (1-a-\alpha)  \,  \ketbra{2}{2}  \quad   \mbox{and}  \quad  \rho^B  =  b  \,
  \ketbra{0}{0}    \,    +    \,    \beta    \,   \ketbra{1}{1}    \,    +    \,
  (1-b-\beta) \, \ketbra{2}{2}
  \]
  where $\{ \ket{0}, \ket{1}, \ket{2} \}$  is an orthonormal basis for $\H^3$, \
  $\ket{ij} =  \ket{i} \otimes \ket{j}$, the  coefficients $(a,\alpha,b,\beta) $
  in the set
  $$D = \{ a, \alpha, b, \beta: \:\: 0 < a,b < 1 \:\:
  \wedge \:\: 0 < \alpha \le 1-a \:\:  \wedge \:\: 0 < \beta \le 1-b \}$$
  and $c$ in the interval
  $$
  C_{a,\alpha,b,\beta} = \big[ - 1 + \max\big\{ a b , \alpha \beta , 1 - a \beta
  , 1 - \alpha b \big\} \, , \, \min\big\{ a b , \alpha \beta , 1 - a \beta , 1
  -\alpha   b   \big\}   \big].$$   Let   us  now   recall   the   Klein
  inequality~\cite[Eq.~(12.7)]{BenZyc06} for concave $\phi$,
  \[
  \Tr \phi(  \rho )  - \Tr \phi(  \sigma ) \:  \le \:  \Tr \left( \left(  \rho -
      \sigma \right) \, \phi'(\sigma) \right),
  \]
  the reversed inequality  holds for convex $\phi$.  If  the Klein inequality is
  applied to $\rho = \rho^A \otimes \rho^B$ and $\sigma = \rho^{AB}$, for $(a,b)
  \in ( 0  \, , \, 1)^2$ (such that $C_{a,\alpha,b,\beta}$  is not restricted to
  $\{0\}$),     and    $c     \in     \mathring{C}_{a,\alpha,b,\beta}$    (where
  $\mathring{\cdot}$  denotes the  interior of  a  set), we  obtain for  concave
  $\phi$,
  \begin{equation}
  \Tr\phi(\rho^A \otimes \rho^B) - \Tr\phi(\rho^{AB}) \: \le \: c \:
   g(a,\alpha,b,\beta,c),
  \label{eq:KleinG}
  \end{equation}
  and the reversed inequality for convex $\phi$, where
  \begin{equation}
  g(a , \alpha , b , \beta , c) = \phi'\big( a b - c \big) + \phi'\big( \alpha
  \beta - c \big) - \phi'\big( a \beta + c \big) - \phi'\big( \alpha b + c \big).
  \end{equation}
  Assume that there exists  $(x,u,y,v) \in \mathring{D}$ such that $g(x,u,y,v,0)
  \ne 0$.  From the continuity of  $\phi'$, function $g$ is continuous, and thus
  there exists  a neighborhood $V_0 \subset \mathring{C}_{x,u,y,v}$  of $0$ such
  that function  $c \mapsto g(x,u,y,v,c)$  has a constant  sign on $V_0$.   As a
  conclusion,  $c  \mapsto  c \,  g(x,u,y,v,c)$  does  not  preserve sign  on  $
  V_0$. This allows us to conclude from~\eqref{eq:KleinG} that when $\phi$
  is concave (resp.  convex), $\Tr\phi(\rho^{AB})$ can be higher (resp.  lower) than
  $\Tr\phi(\rho^A   \otimes   \rho^B)$.   Together   with   the  increasing   (resp.
  decreasing) property of $h$, it  is then clear that if $g(a,\alpha,b,\beta,0)$
  is not identically zero on the domain $\mathring{D}$, then $\SalicruQ$
  cannot be subadditive in the sense global vs product of marginals.

  \

  {\bf Step 2}.  If $g(a,\alpha,b,\beta,0)  = 0$ on $\mathring{D}$, then $\phi'$
  satisfies the functional equation
  \begin{equation}
  \phi'\big( a b \big) + \phi'\big( \alpha \beta \big) -
  \phi'\big( a \beta \big) - \phi'\big( \alpha b \big) = 0,
  \label{eq:FunctionalEquation}
  \end{equation}
  and  one  cannot  use  the  previous  argument to  decide  if  $\SalicruQ$  is
  subadditive    or    not.    In    order    to    solve    this   riddle    we
  follow~\cite[\S~6]{DarJar79},   where  a   similar   functional  equation   is
  discussed.   By  fixing   $(a,b)  \in  (0  \,  ,   \,  1)^2$,  differentiating
  identity~\eqref{eq:FunctionalEquation}   with    respect   to   $\alpha$   and
  multiplying the result by $\alpha$, we obtain
  \[
  \alpha \beta  \, \phi''(\alpha  \beta) = \alpha  b \, \phi''(\alpha  b) \qquad
  \mbox{for} \qquad (\alpha,\beta) \in (0 \, , \, 1-a) \times (0 \, , \, 1-b).
  \]
  This means  that $x  \, \phi''(x)$ is  constant for  $x \in (0  \, ,  \, (1-a)
  \max\{b,1-b\})$, for all $(a,b) \in (0  \, , \, 1)^2$.  Thus, $x \, \phi''(x)$
  is constant for $x \in (0 \, , \, 1 )$.  In other words, $\phi$ is necessarily
  of the  form $\phi(x)  = -  \lambda \,  x \ln x  + \mu  x +  \nu$. Due  to the
  continuity of $\phi$, this is valid on  the closed set $[0 \, , \, 1]$.  Since
  $\Tr \rho = 1$, one can restrict the  analysis to $\mu = 0$ (this value can be
  put in  $\nu$, leaving the  entropy unchanged).  Moreover, this  constant does
  not alter  the concavity or  convexity of  $\phi$ and thus  can be put  in $h$
  [without altering  its monotonicity and,  thus, the sense of  the inequalities
  between $\SalicruQ(\rho^{AB})$ and $\SalicruQ(\rho^A \otimes \rho^B)$ either].
  To ensure strict concavity (convexity) of  $\phi$, one must have $\lambda > 0$
  (resp.  $\lambda < 0$) and thus,  without loss of generality, $\lambda$ can be
  rejected in $h$. Finally,  one can rapidly see that $\phi (x) =  - x \, \ln x$
  satisfies the identity~\eqref{eq:FunctionalEquation}.

  \

  As a conclusion, under the assumptions of the proposition, when $\SalicruQ$ is
  not  an  increasing  function  of  the  von  Neumann  entropy,  it  cannot  be
  subadditive.  Reciprocally, the von Neumann entropy is known to be subadditive
  (see  e.g.~\cite{Lie75, Weh78}),  and  this remains  valid  for any  increasing
  function of this entropy, which finishes the proof.
\end{proof}

Notice that neither R\'enyi nor Tsallis entropies satisfy this subadditivity for
any entropic parameter  except for $\alpha = 1$ \footnote{For  $\alpha = 0$ this
  subadditivity is also satisfied, but note that in this special case, $\phi$ is
  not  continuous  and   moreover  does  not  fulfill  the   conditions  of  the
  proposition.}   (i.e.,\   von  Neumann   entropy).   As  a   consequence,  from
Propositions~\ref{prop:additivityF}  and~\ref{prop:NoSubadditivity},  we  obtain
that, except for the von  Neumann case (and the zero-parameter entropy), R\'enyi
entropies   do   not  neither   satisfy   usual   subadditivity   in  terms   of
$\SalicruQ(\rho^{AB})$  and $\SalicruQ(\rho^A) +  \SalicruQ(\rho^B)$ (as  in the
classical    counterpart~\cite[Ch.~IX,~\S6]{Ren70}).      In    addition,    the
counterexample used in the proof of  the proposition, allows us to conclude that
the  same   nonsubadditivity  also  holds  for  the   classical  counterpart  of
$(h,\phi)$-entropies.

\

Regarding both types  of superadditivity, it is well known  that the von Neumann
entropy does  not satisfy  neither of them.   Here, we  extend this fact  to any
$(h,\phi)$-entropy, as summarized in the following:
\begin{proposition}
\label{prop:NoSuperadditivity}
Let $\rho^{AB}$ be  a density operator of a composite  system $AB$, and $\rho^A$
and  $\rho^B$  the  corresponding  density  operators  of  the  subsystems.  The
$(h,\phi)$-entropy is nonsuperadditive in the sense that
\begin{eqnarray}
\SalicruQ(\rho^{AB}) & \ge & \SalicruQ(\rho^A \otimes \rho^B) \quad \mbox{and}\\
\SalicruQ(\rho^{AB}) & \ge & \SalicruQ(\rho^A) + \SalicruQ(\rho^B)
\end{eqnarray}
are not satisfied for all states.
\end{proposition}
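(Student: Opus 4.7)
The plan is to construct a single explicit counterexample—any bipartite pure entangled state—that simultaneously refutes both claimed superadditivity inequalities. The key observation is that a pure bipartite state has vanishing quantum $(h,\phi)$-entropy by the equality case of Proposition~\ref{prop:boundedentropies}, while its marginals of an entangled pure state are necessarily mixed and therefore carry strictly positive entropy.

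Concretely, I would take $\rho^{AB} = \ketbra{\Psi}{\Psi}$ with $\ket{\Psi} = \frac{1}{\sqrt{2}}\bigl(\ket{e_1^A} \otimes \ket{e_1^B} + \ket{e_2^A} \otimes \ket{e_2^B}\bigr) \in \H_A^{N_A} \otimes \H_B^{N_B}$, for any $N_A, N_B \ge 2$. Proposition~\ref{prop:boundedentropies} gives $\SalicruQ(\rho^{AB}) = 0$ since $\rho^{AB}$ is pure. By the Schmidt decomposition, the reduced states $\rho^A$ and $\rho^B$ share the nonzero spectrum $\{1/2, 1/2\}$, so that $\rho^A \otimes \rho^B$ has spectrum $\{1/4, 1/4, 1/4, 1/4\}$; none of these three operators is pure. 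The strict lower-bound clause of Proposition~\ref{prop:boundedentropies}—which applies thanks to the strict concavity/convexity of $\phi$ and the strict monotonicity of $h$ built into Definition~\ref{def:QuantumSalicru}—then yields $\SalicruQ(\rho^A) = \SalicruQ(\rho^B) > 0$ and $\SalicruQ(\rho^A \otimes \rho^B) > 0$.

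Combining these facts gives $\SalicruQ(\rho^{AB}) = 0 < \SalicruQ(\rho^A \otimes \rho^B)$ and $\SalicruQ(\rho^{AB}) = 0 < \SalicruQ(\rho^A) + \SalicruQ(\rho^B)$, so that neither candidate superadditivity inequality can hold universally. There is essentially no obstacle here: the entire argument is a two-line consequence of the strict lower bound already established. The only subtle point is ensuring the strict positivity of the marginal entropies, which is precisely what the strictness assumptions in Definition~\ref{def:QuantumSalicru} guarantee; without them (e.g.\ if $\phi$ were merely affine or $h$ merely monotone but constant on the relevant interval) one would have to work slightly harder, but under the standing hypotheses the counterexample is immediate.
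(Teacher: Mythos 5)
Your proof is correct, but it uses a genuinely different counterexample from the paper's. You take $\rho^{AB}$ to be a maximally entangled \emph{pure} state, so that $\SalicruQ(\rho^{AB})=0$ by the equality clause of Proposition~\ref{prop:boundedentropies}, while the marginals (and their tensor product) are mixed and hence have strictly positive entropy by the same proposition; both inequalities then fail trivially. The paper instead uses the \emph{separable} classically correlated mixed state $\rho^{AB}=\tfrac12(\ketbra{00}{00}+\ketbra{11}{11})$, whose marginals are $\tfrac{I_2}{2}$, and refutes the first inequality via strict Schur-concavity (the spectrum $(\tfrac14,\tfrac14,\tfrac14,\tfrac14)$ of $\rho^A\otimes\rho^B$ is strictly majorized by $(\tfrac12,\tfrac12,0,0)$) and the second via positivity of the entropies. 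Your route is slightly more economical, resting entirely on the bounds proposition; but the paper's choice carries extra information: since Proposition~\ref{prop:entcriteria} already guarantees $\SalicruQ(\rho^{AB}_{\,\mathrm{Sep}})\ge\max\{\SalicruQ(\rho^A),\SalicruQ(\rho^B)\}$ for separable states, exhibiting a \emph{separable} violator shows that the failure of superadditivity is not merely an artifact of entanglement (where negative conditional entropy makes it expected), whereas your pure entangled example leaves open whether superadditivity might still hold on the separable subset. Both arguments are complete and valid proofs of the proposition as stated.
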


\begin{proof}
  Let us consider  the two qubit diagonal system acting on $\H^2 \otimes \H^2$:
  \[
  \rho^{AB} = \frac{1}{2}  (\ketbra{00}{00}  + \ketbra{11}{11}),
  \]
  which  gives  $\rho^A  = \frac{I_2}{2}=  \rho^B  $.   In  this case,  we  have
  $\SalicruQ(\rho^{AB}) = h\left( 2 \phi(\frac12) \right)$, $\SalicruQ(\rho^A) +
  \SalicruQ(\rho^B) = 2\,  h\left( 2 \phi(\frac12)\right)$ and $\SalicruQ(\rho^A
  \otimes   \rho^B)   =   h\left(    4   \phi(\frac14)   \right)$,   such   that
  $\SalicruQ(\rho^{AB})  < \SalicruQ(\rho^A)  + \SalicruQ(\rho^B)$  (due  to the
  positivity  of  the entropies)  and  $\SalicruQ(\rho^{AB}) <  \SalicruQ(\rho^A
  \otimes \rho^B)$ (due to the Schur-concavity property).
\end{proof}

\

For  the case  of von  Neumann entropy,  it is  well known  that the  entropy of
subsystems     of    a     bipartite     pure    state     are    equal     (see
e.g.~\cite[Th.~11.8-(3)]{NieChu10},).   We  extend this  result  to any  quantum
$(h,\phi)$-entropy.
\begin{proposition}
\label{prop:pureAB}
Let $\ket{\psi}$ be a pure state of  a composite system $AB$ and $\rho^A = \Tr_B
\ketbra{\psi}{\psi}$ and $\rho^B  = \Tr_A \ketbra{\psi}{\psi}$ the corresponding
density operators of the subsystems. Then
\begin{equation}
\label{eq:pureAB}
\SalicruQ(\rho^A) = \SalicruQ(\rho^B).
\end{equation}
\end{proposition}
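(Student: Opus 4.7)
The plan is to invoke the Schmidt decomposition of the pure state $\ket{\psi}$ to show that $\rho^A$ and $\rho^B$ have identical nonzero spectra, and then reduce the quantum equality to the classical equality via the link \eqref{eq:SalicruEntquantumdiagonal}.

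First I would write the Schmidt decomposition $\ket{\psi} = \sum_{i=1}^{K} \sqrt{\mu_i} \, \ket{a_i} \otimes \ket{b_i}$, where $\{\ket{a_i}\}$ and $\{\ket{b_i}\}$ are orthonormal families in $\H_A^{N_A}$ and $\H_B^{N_B}$ respectively, the coefficients satisfy $\mu_i \ge 0$ and $\sum_i \mu_i = 1$, and $K \le \min\{N_A, N_B\}$ is the Schmidt rank. Taking partial traces then yields
\[
\rho^A = \sum_{i=1}^{K} \mu_i \, \ketbra{a_i}{a_i}, \qquad \rho^B = \sum_{i=1}^{K} \mu_i \, \ketbra{b_i}{b_i},
\]
which are already in diagonal form. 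Consequently the spectra of $\rho^A$ and $\rho^B$ coincide (both equal to the vector $\mu = [\mu_1 \: \cdots \: \mu_K]^t$), possibly padded with different numbers of zero eigenvalues so that the total dimensions match $N_A$ and $N_B$.

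Second, I would apply the link \eqref{eq:SalicruEntquantumdiagonal} to each side, obtaining $\SalicruQ(\rho^A) = \Salicru(\lambda^A)$ and $\SalicruQ(\rho^B) = \Salicru(\lambda^B)$, where $\lambda^A$ and $\lambda^B$ are the eigenvalue vectors of $\rho^A$ and $\rho^B$. Since $\lambda^A$ and $\lambda^B$ differ from $\mu$ only in additional zero entries, and since the classical $(h,\phi)$-entropy is expansible (this follows immediately from $\phi(0)=0$ applied to the sum in \eqref{eq:SalicruEnt}), we have $\Salicru(\lambda^A) = \Salicru(\mu) = \Salicru(\lambda^B)$, giving the desired equality \eqref{eq:pureAB}.

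There is no real obstacle here, as the result is essentially a structural consequence of the Schmidt decomposition combined with the fact, already established, that $\SalicruQ$ depends on the density operator only through its spectrum. The only point worth flagging is the mild bookkeeping needed when $N_A \neq N_B$, which is precisely what the expansibility of $\Salicru$ handles.
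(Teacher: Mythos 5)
Your proof is correct and follows essentially the same route as the paper: Schmidt decomposition of $\ket{\psi}$, observation that the reduced states share the same nonzero spectrum, and the expansibility of the classical $(h,\phi)$-entropy to handle the zero-padding when $N_A \neq N_B$. No gaps.
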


\begin{proof}
  From  the  Schmidt  decomposition theorem~\cite[Th.~9.1]{BenZyc06},  any  pure
  state $\ket{\psi} \in \H_A^{N_A} \otimes  \H_B^{N_B}$ can be written under the
  form
\begin{equation}
\label{eq:Schmidt}
\ket{\psi} = \sum_{i=1}^N \sqrt{\lambda_i} \, \ket{e_i^A} \otimes \ket{e_i^B},
\end{equation}
where  $\{\ket{e_i^A}\}_{i=1}^{N_A}$ and  $\{\ket{e_i^B}\}_{i=1}^{N_B}$  are two
orthonormal bases  for $\H_A^{N_A}$ and  $\H_B^{N_B}$, respectively, and  $N \le
\min\left\{N_A,N_B \right\}$.  The density operators of the subsystems are then
\begin{equation}
\rho^A = \sum_{i=1}^{N_A} \lambda_i \ketbra{e_i^A}{e_i^A} \qquad \mbox{and} \qquad
\rho^B = \sum_{i=1}^{N_B} \lambda_i \ketbra{e_i^B}{e_i^B},
\end{equation}
so  that the first  $N$ eigenvalues  $\lambda_i$ are  equal, the  remaining ones
being zero.   Therefore, using  the expansibility property,  both have  the same
quantum $(h,\phi)$-entropy.
\end{proof}


\subsection{Composite systems II: entanglement detection}
\label{s:CompositeSystemsII}

Now, we  consider the  use of quantum  $(h,\phi)$-entropies in  the entanglement
detection problem.  As  with the classical entropies, one  would expect that the
quantum entropies  of density  operators reduced to  subsystems were  lower than
that of  the density operator  of the composite  system. We show here  that this
property turns out to be valid for separable density operators. We recall that a
bipartite  quantum  state  is  separable  if  it can  be  written  as  a  convex
combination of  product states~\cite{Wer89}, that  is\footnote{Equivalently, the
  pure  states  $\ketbra{\psi_m^A}{\psi_m^A}$ and  $\ketbra{\psi_m^B}{\psi_m^B}$
  can   be   replaced  by   mixed   states   defined   on  $\H^A$   and   $\H^B$
  respectively~\cite{NieKem01}.},
\begin{equation}\label{eq:separable}
\rho^{AB}_{\,\mathrm{Sep}} = \sum_{m=1}^M \omega_m \,\ketbra{\psi_m^A}{\psi_m^A}
\otimes \ketbra{\psi_m^B}{\psi_m^B} \qquad \mbox{with} \qquad \omega_m \geq 0 \
\mbox{and} \ \sum_{m=1}^M \omega_m=1.
\end{equation}

For bipartite separable states, we have the following:
\begin{proposition}
\label{prop:entcriteria}
Let $\rho^{AB}_{\,\mathrm{Sep}}$ be a  separable density operator of a composite
system  $AB$,  and  let  $\rho^A$  and $\rho^B$  be  the  corresponding  density
operators of the subsystems. Then
\begin{equation}
\label{eq:entcriteria}
\SalicruQ(\rho^{AB}_{\,\mathrm{Sep}}) \geq \max \left\{ \SalicruQ(\rho^A) \, , \,
\SalicruQ(\rho^B) \right\} ,
\end{equation}
for any pair of entropic functionals $(h, \phi)$.
\end{proposition}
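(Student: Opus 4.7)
The plan is to reduce the claim to an application of the Schur-concavity property (Proposition~\ref{prop:Schurconcavity}) via a pair of majorization relations between the separable composite state and each of its marginals. Specifically, I would establish
\[
\rho^{AB}_{\,\mathrm{Sep}} \prec \rho^A \qquad \text{and} \qquad \rho^{AB}_{\,\mathrm{Sep}} \prec \rho^B,
\]
understood as majorization of the respective eigenvalue vectors, with zero-padding to equalize lengths as in the conventions of Sec.~\ref{s:Review} (which is harmless by expansibility of the $(h,\phi)$-entropies).

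Granted these two relations, Proposition~\ref{prop:Schurconcavity} immediately yields $\SalicruQ(\rho^{AB}_{\,\mathrm{Sep}}) \geq \SalicruQ(\rho^A)$ and $\SalicruQ(\rho^{AB}_{\,\mathrm{Sep}}) \geq \SalicruQ(\rho^B)$, and taking the maximum of the right-hand sides gives exactly~\eqref{eq:entcriteria}. Observe that this reduction does not depend on the choice of entropic functionals, which matches the claim that~\eqref{eq:entcriteria} holds for \emph{every} pair $(h,\phi)$.

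The main obstacle is establishing the two majorization relations, which together constitute the well-known Nielsen--Kempe disorder criterion for separability~\cite{NieKem01}. Starting from the decomposition~\eqref{eq:separable} and the reduced density operators $\rho^A = \sum_m \omega_m \ketbra{\psi_m^A}{\psi_m^A}$ and $\rho^B = \sum_m \omega_m \ketbra{\psi_m^B}{\psi_m^B}$, the argument proceeds by exhibiting a doubly stochastic matrix $D$, built from the overlaps between the eigenvectors of $\rho^{AB}_{\,\mathrm{Sep}}$, the eigenvectors of $\rho^A$, and the pure-state components $\ket{\psi_m^A}\otimes\ket{\psi_m^B}$, such that the (zero-padded) spectrum of $\rho^{AB}_{\,\mathrm{Sep}}$ is the image under $D$ of the spectrum of $\rho^A$; the Hardy--Littlewood--P\'olya lemma (cf.~the proof of Proposition~\ref{prop:staticalmixture}) then gives $\rho^{AB}_{\,\mathrm{Sep}} \prec \rho^A$, and the symmetric construction yields $\rho^{AB}_{\,\mathrm{Sep}} \prec \rho^B$. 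Since the disorder criterion is a standard result, I would invoke~\cite{NieKem01} directly rather than reproduce its technical construction.

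The content of the proposition is thus that Schur-concavity upgrades a single, state-level majorization relation into a uniform entropic inequality valid for the whole $(h,\phi)$-family, providing a necessary condition for separability that simultaneously subsumes the R\'enyi- and Tsallis-type entanglement criteria present in the literature. Its contrapositive, namely that violation of~\eqref{eq:entcriteria} for any admissible $(h,\phi)$ certifies entanglement, is the usable form for detection.
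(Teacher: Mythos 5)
Your proposal is correct and follows essentially the same route as the paper: both invoke the Nielsen--Kempe majorization (disorder) criterion of~\cite{NieKem01} to obtain $\rho^{AB}_{\,\mathrm{Sep}} \prec \rho^A$ and $\rho^{AB}_{\,\mathrm{Sep}} \prec \rho^B$, and then conclude via the Schur-concavity established in Proposition~\ref{prop:Schurconcavity}. The additional sketch you give of how the doubly stochastic matrix arises is consistent with the cited construction and does not change the argument.
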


\begin{proof}
  This  is a  corollary of  a more  general criterion  of separability  given in
  Ref.~\cite{NieKem01}  (also given  in~\cite[B.4,~p.~386]{BenZyc06}),  based on
  majorization.  Indeed,  from  that  criterion, a  separable  density  operator
  $\rho^{AB}_{\,\mathrm{Sep}}$  and the reduced  density operators  $\rho^A$ and
  $\rho^B$ satisfy the majorization relations
  \begin{equation}\label{eq:majorization separable}
  \rho^{AB}_{\,\mathrm{Sep}} \prec \rho^A \qquad \mbox{and} \qquad
  \rho^{AB}_{\,\mathrm{Sep}} \prec \rho^B.
  \end{equation}
  Inequality~\eqref{eq:entcriteria} is thus a consequence of the Schur-concavity
  of $\SalicruQ$, proved in Proposition~\ref{prop:Schurconcavity}.
\end{proof}

It is  worth mentioning  that the majorization  relations \eqref{eq:majorization
  separable}    do    not    imply    the   separability    of    the    density
operator~\cite{BenZyc06, NieKem01} and thus \eqref{eq:majorization separable} is
a sufficient  condition for  the derivation of~\eqref{eq:entcriteria}.  In other
words, some pair(s) of entropic functionals $(h,\phi)$ and a density operator of
the composite system violate~\eqref{eq:entcriteria}.

Proposition~\ref{prop:entcriteria}  was   proved  originally  for   von  Neumann
entropy~\cite{HorHor94}, and later  on for some other quantum  entropies such as
the R\'enyi, Tsallis, and Kaniadakis ones (see e.g.~\cite{AbeRaj01:01, TsaLlo01,
  CanRos02,  RosCas03, OurHam15}  or~\cite[B.5,~p.~387]{BenZyc06}).  Remarkably,
this property turns out to be fulfilled by any quantum $(h,\phi)$-entropy.

\

As  an  example  we   use  Proposition~\ref{prop:entcriteria}  in  the  case  of
$(f,\alpha)$-entropies, in  order to verify  its efficiency to  detect entangled
Werner  states  of  two qubit  systems.  Werner  density  operators are  of  the
form~\cite{Wer89} or~\cite[Eq.~(15.42),~p.~382-383]{BenZyc06}:
\begin{equation}
\label{eq:Wernerstates}
\rho^{AB} = \omega \,\ketbra{\Psi^-}{\Psi^-} + (1-\omega) \frac{I_4}{4},
\end{equation}
where  $\ket{\Psi^-}=\frac{1}{\sqrt{2}}   (\ket{00}-\ket{11})$  is  the  singlet
state, $\ket{0}$ and  $\ket{1}$ are eigenstates of the  Pauli matrix $\sigma_z$,
and $\omega  \in [0,1]$.  It is well  known that Werner states  are entangled if
and only  if $\omega >  \frac13$.  The density  operators of the  subsystems are
$\rho^A     =     \frac{I_2}{2}     =     \rho^B$.      Therefore,     following
Proposition~\ref{prop:entcriteria}  for an  $(f,\alpha)$-entropy, we  can assert
that the Werner states are entangled if the function
\begin{equation}
\label{eq:ZWernerstates}
Z_{f,\alpha}(\omega) = \left\{\begin{array}{lll} \displaystyle \frac{ f
\left( 3 \left(\frac{1-\omega}{4}\right)^\alpha  +
\left(\frac{1+3\omega}{4}\right)^\alpha \right) - f \left(2^{1-\alpha}
\right)}{\alpha-1} , & \qquad & \alpha \ne 1\\[5mm]
\displaystyle \frac{3 \, (1 - \omega) \ln(1-\omega) \, + \, (1+3\omega) \ln
(1+3\omega)}{4} - \ln 2 , & \qquad & \alpha = 1
\end{array}\right.
\end{equation}
is positive.   Note that,  since $f$ is  increasing, the sign  of $Z_{f,\alpha}$
does not depend on the choice of $f$, so that we can take $f(x) = \ln x$ without
loss   of  generality.    Figure~\ref{fig:wernerent}  is   a  contour   plot  of
$Z_{\ln,\alpha}(\omega)$  versus   $\omega$  and  $\alpha$.    The  dashed  line
represents the boundary between the entangled situation ($\omega > \frac13$) and
the separable  one ($\omega  < \frac13$), and  the solid line  distinguishes the
situation  $Z_{f,\alpha}(\omega)  >  0$   (to  the  right)  from  the  situation
$Z_{f,\alpha}(\omega) < 0$ (to the left).  It can be seen that, in this specific
example,  the   entropic  entanglement  criterion  is   improved  when  $\alpha$
increases.  This can be well understood noting that, when $\alpha \to \infty$, \
$Z_{\ln,\alpha}(\omega) \to  \ln \left(  \frac{1+3 \omega}{2} \right)$,  that is
positive if  and only if  $\omega >  \frac13$, i.e.,\ if  and only if  the Werner
states are entangled.

\begin{figure}[htbp]
\begin{center}
\includegraphics{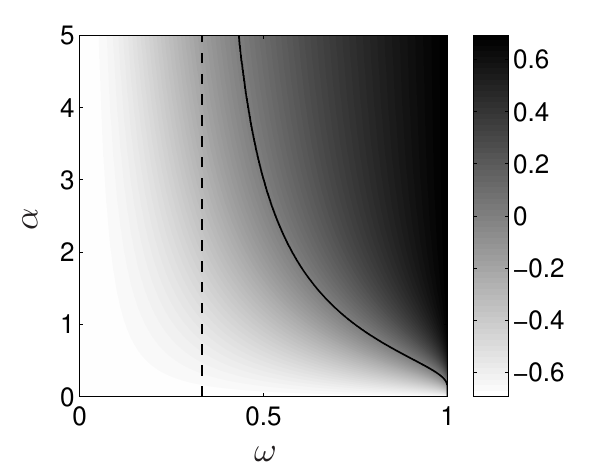}
\end{center}
\caption{Contour plot  of function $Z_{\ln,\alpha}(\omega)$  versus $\omega$ and
  $\alpha$,  as given  in  Eq.~\eqref{eq:ZWernerstates}.  To  the  right of  the
  dashed line  (at $\omega=\frac 13$)  the Werner states~\eqref{eq:Wernerstates}
  are  entangled,  while  to  its  left  they are  separable.   The  solid  line
  corresponds to  $Z_{f,\alpha}(\omega) = 0$  and limits two situations:  to the
  right  where $Z_{f,\alpha}(\omega)$ is  positive and  the criterion  allows to
  conclude that the  states are entangled, and to the left  where nothing can be
  said about the states.}
\label{fig:wernerent}
\end{figure}

This  simple  illustration  aims  at  showing  that  the  use  of  a  family  of
$(h,\phi)$-entropies  instead   of  a  particular  one,  or   playing  with  the
parameter(s)  of  parametrized   $(h,\phi)$-entropies,  allows  one  to  improve
entanglement detection.

Naturally the majorization entanglement  criterion is stronger than the entropic
one. Indeed, for the example given above, the majorization criterion detects all
entangled Werner  states.  However,  there are situations  where the  problem of
computation of the eigenvalues of the density operator happens to be harder than
the calculation  of the trace in  the entropy definition (at  least for entropic
functionals of the  form $\phi(x) = x^n$, with $n$  integer). Moreover, from the
converse  of  Karamata   theorem  (see  Sec.~\ref{s:Review}),  the  majorization
criterion  becomes   equivalent  to  the  entropic  one   when  considering  the
\textit{whole} family  of $(h,\phi)$-entropies.  This  allows us to  expect that
the  more ``nonequivalent''  entropies  are used,  the  better the  entanglement
detection should be.

\

Another motivation for the use of general entropies in entanglement detection is
that, in a more realistic scenario, one does not have complete information about
the  density  operator, so  the  majorization  criterion,  and consequently  the
entropic  one can  not be  applied.   It happens  usually that  one has  partial
information from mean values of certain  observables. In that case, one needs to
use some inference  method to estimate the density  operator compatible with the
available information.  One  of the more common methods  for obtaining the least
biased density operator  compatible with the actual information,  is the maximum
entropy principle~\cite{Jay57} (MaxEnt for  brevity).  That is, the maximization
of  von  Neumann entropy  subject  to the  restrictions  given  by the  observed
data. However, this  procedure can fail when dealing  with composite systems, as
shown  in Ref.~\cite{HorHor99}.  Indeed,  MaxEnt using  von Neumann  entropy can
lead to fake  entanglement, which means that it  predicts entanglement even when
there   exists    a   separable   state   compatible   with    the   data.    In
Ref.~\cite{CanRos02} it was  shown that, using concave quantum  entropies of the
form $\mathbf{H}_{(\id,\phi)}(\rho) = \Tr (\phi(\rho))$, it is possible to avoid
fake  entanglement   when  the  partial   information  is  given   through  Bell
constraints.

\

Now  we address the  following question  that arises  in a  natural way.   Is it
possible  to  use   the  constructions  given  above  to   say  something  about
multipartite entanglement?  As the  number of subsystems grows, the entanglement
detection  problem  becomes  more  and  more involved,  even  for  the  simplest
tripartite  case  (see  e.g.~\cite{Sve87,   Mer90,  SeeSve02}).  Indeed,  for  a
multipartite  system,  one  has   to  distinguish  between  the  so-called  full
separability and  many types  of partial separability  (see e.g.~\cite{HorHor09}
and  references therein).   Here  we  briefly discuss  a  possible extension  of
Proposition~\ref{prop:entcriteria} for  fully separable states.   The definition
of full multipartite  separability for $L$ subsystems acting  on a Hilbert space
$\mathcal{H}_L   =  \bigotimes_{l=1}^L   \H_l^{N_l}$  is   a   direct  extension
of~\eqref{eq:separable}, that is,
\begin{equation}\label{eq:separable multi}
\rho^{A_1\cdots A_L}_{\,\mathrm{FullSep}} = \sum_{m=1}^M \omega_m
\,\ketbra{\psi_m^{A_1}}{\psi_m^{A_1}} \otimes \cdots \otimes
\ketbra{\psi_m^{A_L}}{\psi_m^{A_L}}, \ \mbox{with} \ \omega_m \geq 0 \
\mbox{and} \ \sum_{m=1}^M \omega_m = 1.
\end{equation}

For multipartite fully separable states, we have the following:
\begin{proposition}
\label{prop:entcriteria multi}
Let  $\rho^{A_1\cdots A_L}_{\,\mathrm{FullSep}}$  be a  fully  separable density
operator of an  $L$-partite system, and let $\rho^{A_1},  \ldots, \rho^{A_L}$ be
the corresponding density operators of the subsystems. Then
\begin{equation}
\label{eq:entcriteria multi}
\SalicruQ(\rho^{A_1\cdots A_L}_{\,\mathrm{FullSep}}) \geq \max \left\{
\SalicruQ(\rho^{A_1}), \, \ldots , \, \SalicruQ(\rho^{A_L}) \right\},
\end{equation}
for any pair of entropic functionals $(h, \phi)$.
\end{proposition}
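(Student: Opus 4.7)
The plan is to reduce Proposition~\ref{prop:entcriteria multi} to the already established bipartite case (Proposition~\ref{prop:entcriteria}) by regrouping the $L-1$ subsystems distinct from a chosen one. Concretely, for each index $l \in \{1,\ldots,L\}$, I would introduce the bipartition $A_l$ versus $\bar{A}_l = A_1\cdots A_{l-1}A_{l+1}\cdots A_L$, and view $\mathcal{H}_L$ as the tensor product $\H_l^{N_l} \otimes \H_{\bar l}$ with $\H_{\bar l} = \bigotimes_{k\ne l} \H_k^{N_k}$.

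The first step is to observe that any fully separable state as in~\eqref{eq:separable multi} can be rewritten, by associating the $L-1$ tensor factors distinct from the $l$-th one, as
\[
\rho^{A_1\cdots A_L}_{\,\mathrm{FullSep}} \;=\; \sum_{m=1}^M \omega_m \, \ketbra{\psi_m^{A_l}}{\psi_m^{A_l}} \otimes \ketbra{\Psi_m^{\bar A_l}}{\Psi_m^{\bar A_l}},
\]
where $\ket{\Psi_m^{\bar A_l}} = \bigotimes_{k\ne l} \ket{\psi_m^{A_k}}$ is a pure product state on $\H_{\bar l}$. This is a legitimate bipartite separable decomposition across the cut $A_l \,|\, \bar A_l$ in the sense of~\eqref{eq:separable}, and its marginals are exactly $\rho^{A_l}$ and $\rho^{\bar A_l} = \Tr_{A_l} \rho^{A_1\cdots A_L}_{\,\mathrm{FullSep}}$.

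The second step is a direct invocation of Proposition~\ref{prop:entcriteria} for this bipartition, or equivalently of the underlying majorization relation of Ref.~\cite{NieKem01}, to conclude that $\rho^{A_1\cdots A_L}_{\,\mathrm{FullSep}} \prec \rho^{A_l}$. Then Schur-concavity of the quantum $(h,\phi)$-entropy (Proposition~\ref{prop:Schurconcavity}) yields $\SalicruQ(\rho^{A_1\cdots A_L}_{\,\mathrm{FullSep}}) \geq \SalicruQ(\rho^{A_l})$. Since $l$ was arbitrary, taking the maximum over $l \in \{1,\ldots,L\}$ gives~\eqref{eq:entcriteria multi}.

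The only piece of genuine content is the regrouping observation in the first step; everything that follows is a black-box appeal to results already proved in the paper. The main obstacle is purely cosmetic rather than substantive: one must be comfortable with the natural isomorphism $\mathcal{H}_L \simeq \H_l^{N_l} \otimes \H_{\bar l}$ and verify that a tensor product of $L-1$ pure states on the factors of $\H_{\bar l}$ is itself a pure state on $\H_{\bar l}$, so that full $L$-partite separability indeed implies bipartite separability across each of the $L$ cuts $A_l \,|\, \bar A_l$.
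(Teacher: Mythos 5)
Your proof is correct, but it takes a genuinely different route from the paper's. You reduce the $L$-partite statement to the bipartite Proposition~\ref{prop:entcriteria} by coarse-graining: for each $l$ you group the remaining factors into $\H_{\bar l}=\bigotimes_{k\ne l}\H_k^{N_k}$, note that $\bigotimes_{k\ne l}\ket{\psi_m^{A_k}}$ is a pure state on $\H_{\bar l}$ so that the fully separable state is bipartite separable across every cut $A_l\,|\,\bar A_l$ in the sense of Eq.~\eqref{eq:separable}, check that the $A_l$-marginal is unchanged by the regrouping, and then invoke the bipartite majorization result and Schur-concavity. The paper instead reproves the majorization relation $\rho^{A_1\cdots A_L}_{\,\mathrm{FullSep}}\prec\rho^{A_l}$ from scratch in the multipartite setting: it applies the Schr\"odinger mixture theorem twice (once to the global state, once to each marginal), and explicitly constructs a bistochastic matrix $B^l$ with $\lambda=B^l\lambda^l$ before concluding via Schur-concavity. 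Your reduction is shorter and cleaner, using the bipartite case as a black box; the only points needing care (which you flag) are the associativity/permutation isomorphism of tensor factors, under which the spectrum --- and hence $\SalicruQ$, by Proposition~\ref{prop:unitary} --- is invariant, and the purity of a product of pure states. The paper's direct construction is more self-contained and makes the bistochastic matrix explicit, which is arguably better suited to the extensions to partial-separability classes that the authors announce as future work, but for the statement as given your argument is complete and equally rigorous.
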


\begin{proof}
  The  majorization  relations~\eqref{eq:majorization  separable} for  separable
  bipartite states  are mainly  based on the  Schr\"odinger mixture  theorem, so
  that~\eqref{eq:majorization separable} can  be generalized to the multipartite
  case in  a direct  way~\cite{NieKem01}.  Let us  consider the  fully separable
  density  operator~\eqref{eq:separable multi},  written in  a diagonal  form as
  $\rho^{A_1\cdots     A_L}_{\,\mathrm{FullSep}}      =     \sum_i     \lambda_i
  \ketbra{e_i}{e_i}$. On  the one hand, from the  Schr\"odinger mixture theorem,
  there exists a unitary matrix $U$ such that
\begin{equation} \label{eq:mixth 1}
\sqrt{\lambda_i} \, \ket{e_i} = \sum_m U_{im} \sqrt{ \omega_m} \,
\ket{\psi_m^{A_1}} \otimes \cdots \otimes \ket{\psi_m^{A_l}} \otimes \cdots
\otimes \ket{\psi_m^{A_L}} .
\end{equation}
On     the    other     hand,    let     $\rho^{A_l}    =     \sum_m    \omega_m
\ketbra{\psi^{A_l}_m}{\psi^{A_l}_m}$  be  the  density  operator  of  the  $l$th
subsystem   and   its   diagonal   form   $\rho^{A_l}   =   \sum_j   \lambda^l_j
\ketbra{e^l_j}{e^l_j}$. Using again the  Schr\"odinger mixture theorem, there is
a unitary matrix $V^l$ such that
\begin{equation} \label{eq:mixth 2}
\sqrt{ \omega_m} \, \ket{\psi_m^{A_l}} = \sum_j V^l_{mj} \sqrt{\lambda_j^l} \,
\ket{e_j^l}.
\end{equation}
Substituting~\eqref{eq:mixth  2} into~\eqref{eq:mixth  1},  left-multiplying the
result     by     its      adjoint     and     using     the     orthonormality,
$\braket{e_j^l}{e_{j'}^l}=\delta_{jj'}$      and     $\braket{e_i}{e_{i'}}     =
\delta_{ii'}$, we obtain
\begin{equation} \label{eq:mixth 3}
\lambda = B^l \,\lambda^l,
\end{equation}
with  $B^l_{ij} = \sum_{m,m'}  U^*_{im'} U_{im}  \ V^{l  \, *}_{m'j}  V^l_{mj} \
\prod_{l'  \neq   l}  \braket{\psi_m^{A_{l'}}}{\psi_{m'}^{A_{l'}}}$.   Following
similar  arguments  as  in  the  bipartite  case~\cite[Th.~1]{NieKem01},  it  is
straightforward  to show  that  $B^l$ is  a  bistochastic matrix  and thus  that
$\lambda \prec  \lambda^l$.  Finally, using the  Schur-concavity of $\SalicruQ$,
we    obtain    $\SalicruQ(\rho^{A_1\cdots    A_L}_{\,\mathrm{FullSep}})    \geq
\SalicruQ(\rho^{A_l})$  for any  $l$, and  thus inequality~\eqref{eq:entcriteria
  multi}.
\end{proof}

Some  interesting  problems  to  be   addressed  are  the  application  of  this
proposition to particular  multipartite states, as well as  the extension of the
generalized entropic criteria to  different types of partial separability. These
points and related derivations are beyond the scope of the present contribution,
and will be the subject of future research.


\section{On possible further generalized informational quantities}
\label{s:Furtherdefinitions}

How  to obtain  useful conditional  entropies and  mutual informations  based on
generalized entropies is  an open question and there is  no general consensus to
answer it, even in the classical case (see e.g.~\cite{Fur06, Ras12:02, TeiMat12,
  Ras16} for different  proposals). Here, we first discuss  briefly two possible
definitions of classical conditional entropies and mutual informations, based on
$(h,\phi)$-entropies.   We then  proceed  to obtain  quantum  versions of  those
quantities.


\subsection{A  generalization  of  classical  conditional entropies  and  mutual
  informations}

Let us consider a pair of random variables $(A,B)$ with joint probability vector
$p^{AB}$, i.e.,\  $p^{AB}_{a,b} = \Pr[A=a,B=b]$, and  let us denote  by $p^A$ and
$p^B$  the corresponding  marginal  probability vectors,  namely $p^A_a=  \sum_b
p^{AB}_{a,b}$  and  $p^B_b  =   \sum_a  p^{AB}_{a,b}$.   From  Bayes  rule,  the
conditional probability vector for $A$ given that $B=b$, \ $p^{A|b}$, is defined
by $\displaystyle  p^{A|b}_a = \frac{p^{AB}_{a,b}}{p^B_b}$  (and analogously for
$p^{B|a}$). For the sake of convenience, in this section we indifferently denote
$\Salicru(A,B)$ or $\Salicru(p^{AB})$ (and similarly for the marginals).

In  order to  define  a conditional  $(h,\phi)$-entropy  of $A$  given $B$,  one
possibility is to take the  average of the $(h,\phi)$-entropy of the conditional
probability $p^{A|b}$ over all outcomes for $B$ (in a way similar to the Shannon
entropy~\cite{Sha48},  or also  to  the  R\'enyi and  Tsallis  entropies, as  in
Refs.~\cite{TeiMat12,Ras12:02}). This leads to the following:
\begin{definition}
\label{def:ClassicalConditionalJ}
Let us consider a pair of random variables $(A,B)$ with joint probability vector
$p^{AB}$.  We define the $\J$-conditional $(h,\phi)$-entropy of $A$ given $B$ as
\begin{equation}
\label{eq:ClassicalConditionalJ}
\Salicru^{\J}(A|B) \, = \, \sum_b p^B_b \: \Salicru(p^{A|b}).
\end{equation}
\end{definition}
From Def.~\ref{def:ClassicalConditionalJ},  one can thus  define a ``$\J$-mutual
information'' as
\begin{equation}
\label{eq:ClassicalJ}
\J_{(h,\phi)}(A;B) = \Salicru(A) - \Salicru^{\J}(A|B).
\end{equation}
However,   except  when   $h$   is   concave,  there   is   no  guarantee   that
$\mathcal{J}_{(h,\phi)}(A;B)$            is           nonnegative           (see
Proposition~\ref{prop:concavity}).  Besides,  this quantity is  not symmetric in
general.

An alternative proposal is to mimic the chain rule satisfied by Shannon entropy,
$H(A|B) = H(A,B) - H(B)  $, and thus to define a conditional entropy as follows:
\begin{definition}
\label{def:ClassicalConditionalI}
Let us consider a pair of random variables $(A,B)$ with joint probability vector
$p^{AB}$.  We define the $\I$-conditional $(h,\phi)$-entropy of $A$ given $B$ as
\begin{equation}
\label{eq:ClassicalConditionalI}
\Salicru^{\I}(A|B) \, = \, \Salicru(A,B) - \Salicru(B).
\end{equation}
\end{definition}

It   can  be   shown  that   this  quantity   is  nonnegative   from  Petrovi\'c
inequality~\cite[Th.~8.7.1]{Kuc09} together  with the appropriate  increasing or
decreasing    behavior   of~$h$    (see    also   the    properties   of    the
$(h,\phi)$-entropies, Section~\ref{s:Review}).

From   Def.~\ref{def:ClassicalConditionalI},   one   can  define   a   symmetric
``$\I$-mutual information'' as
\begin{equation}
\label{eq:ClassicalI}
\I_{(h,\phi)}(A;B) = \Salicru(A) - \Salicru^{\I}(A|B) = \Salicru(A) + \Salicru(B) -
\Salicru(A,B).
\end{equation}
Notice  that  $\J_{(h,\phi)}(A;B)$  and  $\I_{(h,\phi)}(A;B)$  coincide  in  the
Shannon case, i.e.,  for \ $h(x)=x, \phi(x)=-x\ln x$, but  they are different in
general.  Besides, like $\J_{(h,\phi)}(A;B)$,  $\I_{(h,\phi)}(A;B)$ can  also be
negative.

Other  alternatives  have  been  proposed  in specific  contexts,  such  as  for
Tsallis~\cite{Fur06,  Ras12:02,   Ras16}  or  R\'enyi  entropies~\cite{MulDup13,
  FerBer14, TomBer14,  TeiMat12}, but a unified  point of view  is still missing
and the question remains open.


\subsection{From  classical  to quantum  generalized  conditional entropies  and
  mutual informations}

Let $\rho^{AB}$ be the density operator  of a quantum composite system $AB$, and
let  $\Pi^B  =  \{\Pi^B_j\}$  be   a  local  projective  measurement  acting  on
$\H^{N_B}$,  i.e.,\ $\Pi^B_j  \Pi^B_{j'} =  \delta_{jj'}$  and $\sum_{j=1}^{N_B}
\Pi^B_j = I_{N_B}$.  The density  operator relative to that measurement is given
by $\displaystyle \rho^{A|\Pi^B} = \sum_j p_j \, \rho^{A|\Pi^B_j} $, where
$p_j =  \Tr(I \otimes \Pi^B_j \, \rho^{AB})$  and $\rho^{A|\Pi^B_j}= \frac{I
  \otimes \Pi^B_j  \, \rho^{AB} \,  I \otimes \Pi^B_j}{\Tr(I \otimes  \, \Pi^B_j
  \rho^{AB})}$.

We  define  the  quantum  version  of~\eqref{eq:ClassicalConditionalJ}  for  $A$
conditioned to the local projective measurement $\Pi^B$ acting on $B$, as
\begin{equation}
\label{eq:QuantumConditionalPiJ}
\SalicruQ^{\J}\left(A|B_{\Pi^B} \right) = \sum_j p_j \: \SalicruQ\big(
\rho^{A|\Pi^B_j} \big).
\end{equation}
This  quantity has  been proposed  for trace-form  entropies in~\cite{GigRos14}.
One can  obtain an  independent quantum conditional  entropy taking  the minimum
over       the       set       of       local       projective       measurement
in~\eqref{eq:QuantumConditionalPiJ}, which leads to the following:
\begin{definition}
\label{def:QuantumConditionalJ}
Let $\rho^{AB}$ be the density operator  of a quantum composite system $AB$, and
let  $\Pi^B$ be  a local  projective measurement  acting on  $B$. We  define the
quantum $\J$-conditional $(h,\phi)$-entropy of $A$ given $B$ as
\begin{equation}
\label{eq:QuantumConditionalJ}
\SalicruQ^{\J}( A|B ) = \min_{\{\Pi^B\}} \SalicruQ^{\J}\left(A|B_{\Pi^B} \right).
\end{equation}
\end{definition}

Now,     we     propose     the     quantum    version     of     the     mutual
information~\eqref{eq:ClassicalJ} as
\begin{equation}\label{eq:QuantumJ}
\JQ_{(h,\phi)}(A ; B) = \SalicruQ(\rho^A)-\SalicruQ^{\J}(A|B).
\end{equation}
Notice  that  if   $h$  is  concave,  then  $\JQ_{(h,\phi)}   \ge  0$,  but  its
nonnegativity       is      not       guaranteed      in       general      (see
Proposition~\ref{prop:concavity}).

Another  possibility  is  to  extend  the standard  definition  of  the  quantum
conditional entropy to $(h,\phi)$-entropies,  which leads to the quantum version
of~\eqref{eq:ClassicalConditionalI}:
\begin{definition}
\label{def:QuantumConditionalI}
Let $\rho^{AB}$ be the density operator  of a quantum composite system $AB$, and
$\rho^B$ be the  corresponding density operator of subsystem  $B$. We define the
quantum $\I$-conditional $(h,\phi)$-entropy of $A$ given $B$ as
\begin{equation}
\label{eq:QuantumConditionalI}
\SalicruQ^{\I}(A|B) = \SalicruQ(\rho^{AB}) - \SalicruQ(\rho^B).
\end{equation}
\end{definition}

Notice  that,  contrary  to  the   classical  case,  these  quantities  are  not
necessarily positive, except when $\rho^{AB}$ is separable, as we have precisely
shown in Proposition~\ref{prop:entcriteria}.

On the other  hand, one can propose a  quantum version of~\eqref{eq:ClassicalI},
as follows
\begin{equation}\label{eq:QuantumI}
\IQ_{(h,\phi)}(A;B) = \SalicruQ(\rho^A) + \SalicruQ(\rho^B) -
\SalicruQ(\rho^{AB}).
\end{equation}
Note however that there is no  guarantee of nonnegativity of these quantities in
general (see discussion before Proposition~\ref{prop:NoSuperadditivity}).

Unlike the  classical case, the  quantum mutual informations~\eqref{eq:QuantumJ}
and~\eqref{eq:QuantumI} are  different even for  the von Neumann case,  and this
difference   is    precisely   the   origin   for   the    notion   of   quantum
discord~\cite{OllZur01}. However, attempting to extend the definition of quantum
discord through  a direct  replacement of von  Neumann entropy by  a generalized
$(h,\phi)$-entropy fails in general (see e.g.~\cite{Jur13, BelPlas14}).

Unfortunately,  all  the quantities  given  in  this  section remain  as  formal
definitions   until  one   does  not   provide   a  complete   study  of   their
properties. This task lies beyond the scope of the present work and is currently
under   investigation~\cite{BosBel16}  (see~\cite{BerSes15}   for   a  different
approach of the development of quantum information measures).


\section{Concluding remarks}
\label{s:Conclusions}

We have proposed a quantum  version of the $(h,\phi)$-entropies first introduced
by    Salicr\'u    \textit{et    al.}     for    classical    systems.     Along
Sec.~\ref{s:QuantumEntropies}   we   have  presented   our   main  results   and
definitions.  Indeed,  after the definition of  quantum $(h,\phi)$-entropy given
in Eq.~\eqref{eq:QuantumSalicru}, we have  derived many basic properties related
to Schur concavity and majorization,  valid for any pair of entropic functionals
$h$ and $\phi$ with the proper continuity, monotonicity and concavity properties
established  in  Def.~\ref{def:QuantumSalicru}.   Next,  we have  discussed  the
properties of the $(h,\phi)$-entropies in  connection with the action of quantum
operations and measurements.  And later we  have extended our study for the case
of  composite systems,  focusing  on important  properties  like additivity  and
subadditivity, and discussing an application to entanglement detection. Besides,
in  Sec.~\ref{s:Furtherdefinitions} an  attempt  to deal  with generalized  conditional
entropies  and mutual informations  was presented,  although it  deserves deeper
development, which is beyond our present scope.

The first advantage  of our general approach  has to do with the  fact that many
properties of particular  examples of interest (such as  von Neumann and quantum
R\'enyi  and  Tsallis entropies),  can  be studied  from  the  perspective of  a
unifying formal framework,  which explains in an elegant  fashion many of their
common properties. In particular, our analysis reveals that majorization plays a
key role in explaining most of the common features of the more important quantum
entropic measures.

Remarkably  enough, we have  shown that  many physical  properties of  the known
quantum  entropies, such  as  preservation under  unitary  evolution, and  being
nondecreasing under  bistochastic quantum operations, hold in  the general case.
This is a signal indicating that our proposal yields new entropy functions which
may be of interest for physical purposes.  This kind of study can also be of use
in applications to  the description of quantum correlations, as  we have seen in
Sec.~\ref{s:CompositeSystemsII} for  the case of bipartite  entanglement; in the
case of multipartite systems, we have also discussed a possible extension of the
entropic entanglement  criterion, however restricted to  fully separable states.
Finally,  we mention that  the present  proposal may  also have  applications to
quantum information  processing and the problem of  quantum state determination,
because it  allows for a  more general systematization  of the study  of quantum
entropies.


\section*{Acknowledgments}

GMB, FH,  MP and PWL  acknowledge CONICET and  UNLP (Argentina), and MP  and PWL
also acknowledge SECyT-UNC (Argentina) for  financial support. SZ is grateful to
the University of Grenoble-Alpes (France) for the AGIR financial support.


\bibliographystyle{spphys}
\bibliography{Generalized_quantum_entropy}

\end{document}